% THIS IS AN EXAMPLE DOCUMENT FOR VLDB 2012
% based on ACM SIGPROC-SP.TEX VERSION 2.7
% Modified by  Gerald Weber <gerald@cs.auckland.ac.nz>
% Removed the requirement to include *bbl file in here. (AhmetSacan, Sep2012)
% Fixed the equation on page 3 to prevent line overflow. (AhmetSacan, Sep2012)

\documentclass{vldb}
\usepackage{graphicx}
\usepackage{subfigure}
\usepackage{amsmath}
\usepackage{balance}  % for  \balance command ON LAST PAGE  (only there!)
\usepackage{algorithm}
\usepackage[noend]{algorithmic}
\usepackage{color}
\usepackage{multirow}
\usepackage{url}

\newtheorem{theorem}{Theorem}
\newtheorem{lemma}{Lemma}
\newtheorem{problem}{Problem}
\newdef{definition}{Definition}

%\newalgorithm{algorithm}{Algorithm}
\begin{document}
% ****************** TITLE ****************************************

\title{Efficient Influence Maximization in Weighted Independent Cascade Model}
%\titlenote{for use with vldb.cls}}

\numberofauthors{3}

\author{
% 1st. author
\alignauthor
%{\Large{\sf{\o}}}rv{$\ddot{\mbox{a}}$}ld\titlenote{This author is theone who did all the really hard work.}
Yaxuan Wang\\
       \affaddr{Dept. of Software Engineering}\\
       \affaddr{Harbin Institute of Technology}\\
       %\affaddr{Harbin, China}\\
       \email{wangyaxuan@hit.edu.cn}
% 2nd. author
\alignauthor
Hongzhi Wang\\%\titlenote{the corresponding author}\\
       \affaddr{Dept. of Computer Science}\\
       \affaddr{Harbin Institute of Tecnology}\\
       %\affaddr{Harbin, China}\\
       \email{wangzh@hit.edu.cn}
% 3rd. author
\alignauthor Jianzhong Li\\
       \affaddr{Dept. of Computer Science}\\
       \affaddr{Harbin Institute of Technology}\\
       %\affaddr{Harbin, China}\\
       \email{lijzh@hit.edu.cn}
}
\maketitle

\begin{abstract}
\textit{Influence maximization}(IM) problem is to find a seed set in a social network which achieves the maximal influence spread. This problem plays an important role in viral marketing. Numerous models have been proposed to solve this problem. However, none of them considers the attributes of nodes. Paying all attention to the structure of network causes some trouble applying these models to real-word applications.

Motivated by this, we present \textit{weighted independent cascade} (WIC) model, a novel cascade model which extends the applicability of \textit{independent cascade}(IC) model by attaching attributes to the nodes. The IM problem in WIC model is to maximize the value of nodes which are influenced. This problem is NP-hard. To solve this problem, we present a basic greedy algorithm and \textit{Weight Reset}(WR) algorithm. Moreover, we propose \textit{Bounded Weight Reset}(BWR) algorithm to make further effort to improve the efficiency by bounding the diffusion node influence. We prove that BWR is a \textit{fully polynomial-time approximation scheme}(FPTAS). Experimentally, we show that with additional node attribute, the solution achieved by WIC model outperforms that of IC model in nearly $90\%$. The experimental results show that BWR can achieve excellent approximation and faster than greedy algorithm more than three orders of magnitude with little sacrifice of accuracy. Especially, BWR can handle large networks with millions of nodes in several tens of seconds while keeping rather high accuracy. Such result demonstrates that BWR can solve IM problem effectively and efficiently.

\end{abstract}
%==========================Section1
\section{Introduction}
\label{sec:intro}

On social network, viral marketing uses pre-existing social networking services to produce influence in brand awareness or to achieve other marketing objectives such as product sales. It is an effective marketing strategy since it combines the prospect of overcoming consumer resistance with significantly low costs and fast delivery~\cite{nail:advertising}.

Viral marketing on social network requires to select the initial crowd as a seed set to make most people, who are interested in a specific topic, receive the product information and generate the largest value. Such requirement involves \textit{influence maximization}(IM), one of the most popular research topics in social network.%, which is to find initial nodes in a network with the goal of maximization the influence.
The general IM problem is to find $k$ initial seeds in the network to achieve the greatest value via the propagation from the seeds. Such problem has been studied extensively and some approaches have been proposed~\cite{kempe,Domingos01,Domingos02,chen-sdm11,chen-sdm12,goyal-vldb11,chen-wsdm13,chen-aaai12,chen-kdd09,chen-kdd10,wang-kdd10,goyal-icdm11}.

%Social network provides an effective approach for virtual marketing.
%It is proved that people are more likely to trust the information obtained from strong relations rather than from general media. As a result, \textcolor{red}{social network has become a major medium for word-of-mouth marketing strategy because it provides huge potential for large-scale and prevalent viral marketing}. %}\footnote{to change the order}

%Thus, is to, which called problem. From the aspect of social network, this problem is to
%\textcolor{red}{This problem is actually a further extension of the influence spread maximization problem}.\cite{kempe}. \footnote{importance?}.

Even though existing models and algorithms could solve this problem in many scenarios, they pay all attention to the connectivity of nodes whereas the attributes of nodes are ignored. This defect may cause distress in many practical applications.

Consider the following scenario for instance. An automobile manufacturing company wants to promote the sale for its luxury cars. The promoting strategy is to provide a test drive opportunity for potential customers. However, the budget limits the number of customers who will try the car. Moreover, given the difference of purchasing power of customers, each customer has an additional attribute to represent his possibility of buying this luxury car. Therefore, this company wants to provide limited test drive chances to a small crowd and maximize the effectiveness, which is selling more cars and producing most brand awareness
%. With proper initial crowd, such kind of car would be known by a huge number of potential customers in the social network
due to the effect of \textit{word-of-mouth}(WOM)\cite{wom}. It requires that initial crowd would enjoy the experience of driving and have great influence on other potential customers in the social network.

To achieve this goal, the problem is modeled as selecting the initial group so that eventually this small group can make most potential customers to know and purchase this product. Note that for different customers, the possibility of buying a car may be different. The information transmission is effective for the sale only when it is received by proper customers. Here, the possibility of buying a car could be treated as an attribute of a customer.

This problem looks like traditional influence maximization problem which is to select the most influential individuals in a network \cite{kempe}. However, our target, in this scenario, is not to influence the largest population any more but to maximize the value of the propagation process(such as selling most cars). Unfortunately, existing influence maximization models are not able to maximize the influence value since they focus on the structure of the network and ignore the attributes of nodes themselves. We illustrate this point with an example.

\begin{figure}
\centering
\includegraphics[%
totalheight=30mm]{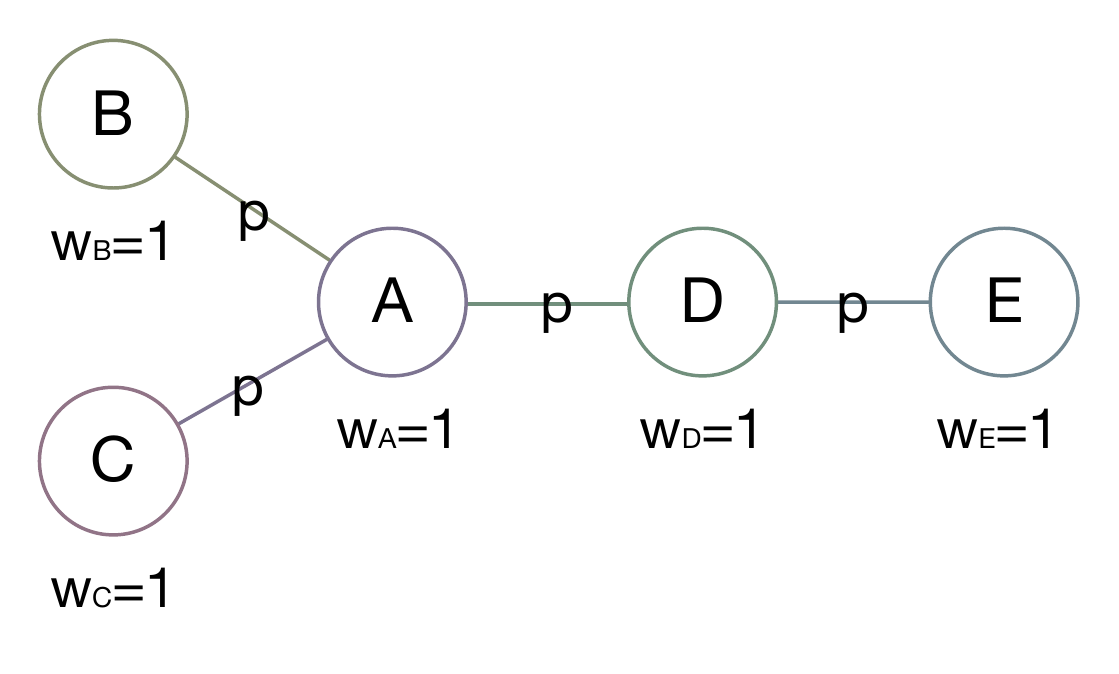}
\caption{A simple social network}
\label{fig:sn}
\end{figure}

Figure~\ref{fig:sn} shows a social network. A probability $p$ attached in each edge represents the possibility of information spread between each other. A value is attached to each node as an attribute to represent the possibility that such customer buys the car. For E who may be a millionaire, such possibility is 100 while it is 1 for others. When we are now targeting to promote the sale of a car according to such social network, it is supposed that only one node could be chosen to obtain a test drive opportunity. In the traditional model such as \cite{kempe}, the target is to make as many nodes as possible receive this information. So the optimum seed set is \{A\}. Clearly, when $p$ is very small, the information is impossible to be propagated to E who is the most likely to buy this car. In contrast, if the attribute of the possibility of buying the car is considered, E could be selected, even though it is not the most connective node in the network. So in this case, traditional models are not feasible any more.

In numerous practical scenarios, nodes may have other attributes, such as the preference index in recommender system~\cite{rs} or  temporal interest in a specific topic~\cite{temporal_preference}. But all traditional models of this problem do not provide extra space to take node attributes into consideration. Moreover, it is an arduous task to revise existing solutions to cater on the extra node attributes since they all choose to neglect the properties of node itself.

%it is better to select $E$ to avoid that $E$ cannot receive information. Therefore, if each node $v$ has one more attribute, a non-negative value, which means the importance of $v$ if $v$ is activated in the outcome, the traditional models and the algorithms on them will be not applicable any more.
%Figure~\ref{fig:sn}.

%\textcolor{red}{Because of, this kind of car would be know by.}
%This problem is modeled as \footnote{not match the problem}

%\textcolor{red}{In practical scenarios, nodes may have other attributes, which have nothing to do with the structure of networks, for example the weight of each node. But all traditional models of this problem doesn't provide extra space to take node attributes into consideration. Moreover, even though existing algorithms have done very well in traditional models where they design delicate data structures to evaluate nodes connectivity, it is an arduous task to revise these solutions to cater on the extra node attributes since they all choose to neglect the properties of node itself.} \footnote{add drawbacks}
%
%\footnote{discuss why existing solutions could not solve the problem}

Motivated by this, in this paper, we attempt to solve above defect of traditional IM problems and remove its specificity. With the requirement of a new model to solve the neglect of node attributes in traditional models, we design a more general model and provide extra space for attributes of nodes and present exclusive algorithms, which are able to take these attributes into consideration.

The IM problem on the new model is not trivial.
Since IM problem without attributes on nodes is a NP-hard problem and it is a special case of the new problem, the new problem is even more difficult and brings technical challenges. In addition to the classical problem of connectivity estimating, designing a criterion which considers both the networking structure and independent attributes of nodes is the main target of this paper. By designing an elaborate mechanism to calculate such criterion, then, our solution can select the most significant nodes according to both a specific aspect and the connectivity.

\subsection{Our Contribution}
In this paper, we propose a novel influence cascade model named \textit{weighted influence cascade}(WIC) model, which extends the applicability of IC model in numerous practical applications by attaching attributes on the nodes.
We prove that the influence maximization problem in this model is NP-hard. Then, we present a basic greedy algorithm to solve the IM problem in our WIC model.

The basic greedy algorithm may have intrinsic trouble to be scalable in large graphs.
We present \textit{Weight Reset}(WR) algorithm to tackle the efficiency issue. Our WR algorithm can return a $(1-1/e)$-approximation and its expected running time is $O(n_{in}\cdot n + k(n + 2n_{in}))$, where $k$ is the size of \textit{seed set}, $n$ is the amount of nodes in a network and $n_{in}$ is the predecessor number of $u$ which is far less than $n$. Moreover, we propose \textit{Bounded Weight Reset}(BWR) algorithm, a \textit{fully polynomial-time approximation scheme}(FPTAS), to make further effort to improve the efficiency of our algorithm by bounding the diffusion node influence. The experimental results show that our BWR algorithm is effective and scalable in both \textit{Independent Cascade}(IC) model and WIC model. More importantly, our algorithm achieves both efficiency and effectiveness. It is scalable to handle large networks with millions of nodes in several tens of seconds while its performance is close to the basic greedy algorithm which achieves the best outcomes in polynomial time.

In summary, our main contributions in this paper are as follows.
\begin{enumerate}
\item We present WIC model to provide a more applicable solution for IM problem, which could maximize the value of the influenced nodes instead of the node number. As we know, this is the first paper to study such a problem.

\item To solve this problem, we propose  a basic greedy algorithm which can achieve a $(1-1/e)$ approximation in polynomial time. For efficiency issues, we design WR algorithm which solves the IM problem in WIC model with a similar approximation ratio to basic greedy algorithm whereas the time complexity is narrowed from $O(knRm)$ to $O(k(n + 2n_{in}))$. To accelerate the algorithm furthermore, we add a branching strategy to WR algorithm and present BWR algorithm.

\item We conduct extensive experiments on real-world social networks with different sizes and features, and show that our WIC model outperforms IC model for influence maximization problem. Experimental results also show that BWR algorithm is better than other existing algorithms besides greedy algorithm. Its running time outperforms greedy about four orders of magnitude with a little sacrifice of accuracy. Such results illustrate the high efficiency of our BWR algorithm, which is practical to achieve high approximation ratio even in gigantic networks.
\end{enumerate}

\subsection{Related work}

\textit{Influence maximization} (IM) problem has been extensively studied. In \cite{Domingos01} and \cite{Domingos02}, Domingos and Richardson defined the basic problem and presented a fundamental algorithm for digging a network from the data. Kempe~\cite{kempe} believed that the issue of choosing influential sets was a discrete optimization problem. He proved that this problem is NP-hard and designed three kinds of cascade models: Independent Cascade (IC) model, Weighted Cascade (WC) model and Linear Threshold (LT) model. He proposed a greedy algorithm framework which can guarantee a $63\%$ accuracy bound in three models. These models only consider the structure of the network while our model considers the attributes on nodes to remove its specificity. Around these models, new models and algorithms are proposed for IM problems. We will give a brief overview of them.

\noindent\textbf{Model} Along with IC and LT models, many influence diffusion models were presented to satisfy different requirements in influence maximization. Chen et al. \cite{chen-sdm11} proposed extended IC model with negative influence called IC-N model, which considers the diffusion of both negative and positive opinions. Moreover, Chen et al. \cite{chen-sdm12} and Borodin et al. \cite{competitive} proposed an extended LT model, named competitive LT model. In this model, two competing opinions compete in one LT model. Kim et al. \cite{CT-IC} presented a novel model called continuously activated and time-restricted IC (CT-IC) model where a node can activate its neighbor repeatedly while the propagation process follows different patterns like \cite{goyal-vldb11}, \cite{chen-wsdm13} and \cite{chen-aaai12}. All these new models improve the traditional models in some specific directions whereas no one considers the attributes of nodes. Since the improvements focus on the structure, they are orthogonal with our work. With adaptation, our model and approaches could be combined with them.
%Thus, we purpose an advanced IC model where each node has an initial weight to present its value if it is activated in the final outcome.

\noindent\textbf{Algorithm} Besides basic greedy algorithm~\cite{kempe}, Leskovec et al.~\cite{leskovec} optimized greedy algorithm by avoiding evaluating the expected spreads. In this way, the computation cost of greedy algorithm can be decreased up to $700-fold$ without reducing its approximation guarantees. This approach was enhanced in \cite{chen-kdd09}  \cite{goyal-www11}  with $50\%$ additional improvements in efficiency. Recently, TIM algorithm whose node selection phase is similar to RIS\cite{goyal-www11}, was presented by Tang et al\cite{tim}. TIM is able to achieve near-optimal time complexity, while it guarantees a $(1 - 1/e)$-approximate solution within $1 - n^{l} $ probability. Although these solutions are good at processing the IM problem in traditional models, their performances in our model are not satisfactory due to extra consideration of the attributes. Thus, we present a weighted reset algorithm to solve the IM problem in our models.

Moreover, some heuristics algorithms are proposed in order to derive a trade-off between efficiency and effectiveness. Chen et al. \cite{chen-kdd10} proposed PMIA which ignores low propagation probabilities. Wang et al. \cite{wang-kdd10} identified influential nodes from different small communities individually. Goyal et al. \cite{goyal-icdm11} estimated the influence of node set based on its surrounding. Although these heuristic algorithms are efficient, they cannot keep an approximation ratio with $(1 - 1/e)$ bound. Thus, their experiment results are inaccurate compared with greedy algorithms.

~\\

\noindent \textbf{Paper organization} Section~\ref{sec:2} introduces WIC model, its hardness and a basic greedy algorithm. Section~\ref{sec:3} proposes our \textit{Weight Reset} algorithm as well as its extended version, \textit{Bounded Weight Reset} algorithm, which are specially for the influence maximization problem in WIC model. In Section~\ref{sec:4}, we show our experimental results. We draw conclusions and discuss future directions for our topic in Section~\ref{sec:5}.

%===================Section 2
\section{WIC MODEL AND ITS GREEDY ALGORITHM}
\label{sec:2}
In this section, we formally define our WIC model and present a basic greedy algorithm which can be applied in the influence maximization problem with the best performance accuracy in WIC model.

\subsection{Problem Definition}
As discussed in Section~\ref{sec:intro}, our WIC model extends the traditional model by adding attributes. Formally, we define WIC model as follows.

%\begin{mydef}
\begin{definition}
(WIC model) Given a directed graph $G = (V,E)$ with node set $V$ and edge set $E$. Let each directed edge $e$ in $E$ have a propagation probability $p_{u,v}\in [0,1]$. For each node $v\in V$, there is a non-negative weight that is computed from its attributes which are independent of the network structure. Such weight is denoted by $w_v$.
\end{definition}

For a node $u$, we call other nodes which can arrive $u$ in finite steps the \textit{predecessors} of $u$. Correspondingly, the nodes which $u$ can arrive are called the \textit{successors} of $u$.

Any social network can be modeled as WIC model. For each node $v$, $w_v$ shows its uniform weight.
%there is an additional attribute to show the difference among the value of each node.
In the example in Section~\ref{sec:intro}, the weight of each customer is the possibility of purchasing such a car.
This additional weight is the main difference between IC model and our WIC model.

On this model, the steps of a time-stamped influence process are as follows.
%Just like the IC model, our \textit{weighted influence cascade(WIC)} model considers a. The concrete steps are as follows:
\begin{enumerate}
\item At timestamp 0, all nodes in $G$=($V$, $E$) are inactive.

\item At timestamp 1, we activate a set of nodes called \textit{seed set} $S_1$ while other nodes are still inactive.

\item At timestamp $i$ ($i>1$), we assume the nodes in $S_{i-1}$ are activated in step $i-1$. For each node $u$ in $S_{i-1}$ and $(u,v)$ in $E$ with $v$ as an inactive node, $v$ is activated with probability $p_{uv}$. If $v$ is activated in this step, $v$ is added to $S_i$. For any $j<i$, $S_i\cap S_j$=$\emptyset$.

\item The process halts when in some step $t$, $S_t$=$\emptyset$.

%If node $u$ is first activated in time stamp $i$, then for every directed edge $e$ which is from $u$ to an inactive node $v$, $u$ has a single chance with probability $p_{uv}$ to activate $v$ at $i+1$ independently of the the previous process. Every node in WIC model can activate its neighbors with only one chance and if $u$ has multiple inactive neighbors, their attempts are in an arbitrary sequence.
%
%\item Once a node has tried activating other nodes, it remains activated and will never activate its neighbors in subsequent process. The whole process runs until there is no more possible activations.
\end{enumerate}

Note that the basic flow is similar as that of IC model and the difference is that our model provides an attribute-based node selection mechanism to maximize the profit of the influence. For each active node $u$, its profit is defined as the sum of the values of nodes $u$ activated, denoted by $V_u$. Given a seed set $S_1$, $\sigma(S_1)$ denotes the expectation of influence value generated by $S_1$. That is,
 \begin{equation}
\label{equ:sigma}
\sigma(S)=\sum_{u \in S}({\sum_{v \in V}w_v \cdot p_r(u,v) + w_u})
\end{equation}
where $p_r(u,v)$ is the comprehensive reachability probability from $u$ to $v$ which includes all reachable paths.
Obviously, the target of influence maximization problem is to select a proper seed set $S_1$ to maximize $\sigma(S_1)$.

In the previous example, the potential customers who are selected to have a chance of test drive form the \textit{seed set}. The weight of a node represents the probability if a potential customer buying a car. Intuitively, $\sigma(S)$ is the expectation of the number of cars sold out through this influence spread process.

Therefore, influence maximization problem in the WIC model is defined as follows:

\begin{problem}
\label{pro:p1}
Given a nonnegative integer $k$ and a network $G$=($V$, $E$) in WIC model, the influence maximization problem is to find a node set $S^{*} \subseteq V$, such that
\begin{displaymath}
 S^{*} = \arg \max_{S \subseteq V} \{\sigma(S)\ |\ |S| = k\}.
\end{displaymath}
\end{problem}

Given that we are expected to provide extra space for attributes of nodes and select nodes by a comprehensive mechanism, the IM problem on this new model is not trivial. Then, we demonstrate the hardness of influence maximization problem in our WIC model.

%WIC NP-hard
\begin{theorem}
\label{the:NP}
In Weighted Independent Cascade (WIC) model, the influence maximization problem (WIM) is NP-hard.
\end{theorem}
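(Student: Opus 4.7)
The plan is to establish NP-hardness by reduction from a problem already known to be NP-hard. The most economical choice is the influence maximization problem in the classical IC model, proved NP-hard by Kempe et al., because the WIC model strictly generalizes IC: the two differ only in that WIC attaches a node weight $w_v$, and setting $w_v \equiv 1$ recovers exactly the IC setting. So the natural plan is to exhibit a polynomial-time many-one reduction from IM in IC to WIM that preserves the objective.

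Concretely, I would proceed in three steps. First, given an instance $(G, k)$ of IM in the IC model, construct the WIC instance on the same graph $G$ with identical edge probabilities $p_{u,v}$ and node weights $w_v = 1$ for every $v \in V$; this construction is clearly polynomial. Second, verify that under $w_v \equiv 1$ the WIC objective in Equation~(\ref{equ:sigma}) becomes $\sigma(S)=\sum_{u\in S}\bigl(\sum_{v\in V} p_r(u,v)+1\bigr)$, which up to well-understood double-counting across seeds is equivalent to (and monotonic in the same direction as) the classical IC expected-spread function. Third, conclude that the optimal $k$-seed set for the constructed WIC instance is also optimal for the original IC instance, so that a polynomial algorithm for WIM would yield a polynomial algorithm for IM in IC, contradicting the known NP-hardness.

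As a backup, in case the per-seed-sum form of $\sigma$ in Equation~(\ref{equ:sigma}) is deemed too distant from the standard IC spread to be identified directly, I would reduce instead from Set Cover following Kempe's original construction: create a bipartite graph with a vertex for every set (weight $0$) and a vertex for every universe element (weight $1$), add an edge of propagation probability $1$ from each set-vertex to every element-vertex it contains, and observe that a Set Cover of size $k$ exists if and only if there is a WIC seed set of size $k$ whose expected influence value equals $|U|$.

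The main obstacle is therefore not a combinatorial one but a definitional one: pinning down precisely what Equation~(\ref{equ:sigma}) is summing, and checking that maximizing it on the constructed instance is truly equivalent to maximizing the quantity whose maximization is NP-hard. Once the objective-preservation step is confirmed, the remainder of the argument is a one-line invocation of the known hardness result.
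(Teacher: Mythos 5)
Your proposal is correct, and your primary route differs from the paper's. The paper proves hardness by a direct reduction \emph{from} Set Cover: it builds the same bipartite gadget you describe in your backup (one node per set, one node per element, a probability-$1$ edge from a set-node to each element it contains), but assigns weight $1$ to \emph{every} node rather than weight $0$ to the set-nodes, and argues that an optimal $k$-seed set must consist of set-nodes covering the most elements. So your backup is essentially the paper's proof (your weight-$0$ choice for set-vertices is arguably a cleaner way to force the seeds into the set side). Your primary route --- reducing from IC influence maximization by observing that WIC with $w_v\equiv 1$ specializes to IC --- is more economical and is implicitly endorsed by the paper's own remark that ``if we set the value of every node equally, the WIC model can be simplified into IC model,'' but the paper does not use it in the proof. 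The definitional obstacle you flag is real: Equation~(\ref{equ:sigma}) is a per-seed sum that double-counts nodes reachable from several seeds, so with unit weights it is not literally the classical IC expected spread, and making the generalization argument airtight requires either reading $\sigma$ as the true expectation (as the paper's prose intends) or falling back to the Set Cover gadget, where the same wrinkle is harmless because one only needs the ``covers everything'' threshold. Net: your instinct to keep the Set Cover reduction as a backstop is exactly what makes the argument robust, and either of your two routes yields a valid proof.
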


\begin{proof} We prove this theorem by reducing \textit{Set Cover} problem to WIM. Consider an instance of a set cover problem~\cite{np-hard}, a collection of subsets $S_{1}$, $S_{2}$,...,$S_{m}$ of a ground set $U = \{u_{1},u_{2},...,u_{n}\}$. The Set Cover problem is to find $k$ subsets $S_{t_1}$, $S_{t_2}$, $\cdots$, $S_{t_k}$ with maximal $S_{t_1} \cup S_{t_2} \cup \cdots \cup S_{t_k}$.
We attempt to prove that this instance can be treated as a special case in WIC model.

For the instance of the Set Cover problem above, we construct a directed bipartite graph $G$. For each $S_i$, a node $n_{S_i}$ is generated. The set of nodes generated in this way is denoted by $V_{c}$. For each element $e$ in $U$, a node $n_e$ is generated.  The set of nodes generated in this way is denoted by $V_{t}$.
Since $|V_c|$=m and $|V_t|$=n, $G$ has $(n+m)$ nodes.

If $e \in S_i$, a directed edge from node $n_{S_i} \in V_c$ to node $n_e\in V_t$ with probability $p_{n_{S_i}n_e}$ is added. By assigning the weight of node $w_u=1$ for each node $u$ in $G$ and $p_{uv}=1$ for each edge $e_{u,v}$ in $G$, we generate an instance for WIM.

% On The optimal solution of this instance of set cover problem is to choose $k$ subsets from $S_{1} ... S_{m}$ to maximize the coverage.
It is supposed that the optimal solution of WIM is a seed set $S$. Then we transform $S$ to the solution for the original set cover problem by including $S_{j}$ in the solution with each $n_{S_{j}}$ in $S$. Since the optimal solution of this WIM problem are $k$ nodes in $V_{c}$ to influence the most nodes $e$ in $V_{t}$ and each $e$ corresponding to an element in $U$, the optimal solution of WIM with $k$ and $G$ as input is converted into the optimal solution of the instance of the set cover problem.

%Then, the Set cover problem is to find $k$ subsets from $S_{1}$ ... $S_{m}$ which can cover the most elements in the ground set $U$. Considering that we define the value of each node is $V_u=1$ and for each edge, this instance of Set Cover problem is a special case of IM problem in WIC model.

As a result, since set cover problem is an NP-hard problem\cite{np-hard}, WIM problem is NP-hard.
%Since the influence process is random, we define all the probabilities are $0$ or $1$. The \textit{Seed Set} in WIC model corresponds the sets in a Set Cover % solution which activate all $n$ nodes in the bipartite graph which represent the elements in ground set $U$. Thus, if  set $A$ exists, the Set Cover problem can be solved.
\end{proof}
%WIC is generalization
%\begin{theorem}

%\end{theorem}
%\begin{proof}
The influence maximization problem in \textit{WIC} model is a generalization of this problem in \textit{IC} model. Compared with IC model, WIC model has weight as an additional attribute for each node to differentiate values generated by activation of different nodes. Thus, if we set the value of every node equally, the WIC model can be simplified into IC model. In fact, the WIC model is more general since it considers the values of nodes which are neglected in IC model. Thus, all the solutions used to solve the influence maximization problem in \textit{WIC} model can be adopted in general IC model.
%\end{proof}

%============2.2 Basic Greedy
\subsection{The Basic Greedy Algorithm}
\label{sec:greedy}
As shown in Theorem~\ref{the:NP}, WIM is an NP-hard problem. Thus, for a large social network, we attempt to design a polynomial approximation algorithm. In this section, we propose such an approximation algorithm and prove its approximation guarantees.

%The influence maximization problem in \textit{WIC} model is to find a subset $\sideset{}{^{*}}{\mathop{\mathrm{S}}} \subseteq V$ where $\sideset{}{_{I}}{\mathop{\mathrm{\sigma}}}(\sideset{}{^{*}}{\mathop{\mathrm{S}}}) = \max\{\sideset{}{_{I}}{\mathop{\mathrm{\sigma}}}(S)\ |\ |S| = k,S\subseteq V\}$. Although the influence maximization problem in WIC model is NP-hard, there is an greedy algorithm with a constant-ratio approximation. In this subsection, we describe this strategy and prove its approximation guarantees.
Given a graph $G(V,E)$ and $k$ as input, the WIM problem is to find a subset $S^{*} \subseteq V$ where $\sigma(S^{*})=\max \{\sigma(S)|\ |S| = k \}$. The strategy of our greedy algorithm is to choose the node which can make maximal marginal gain for $\sigma$.

%Algo1 simple greedy
\begin{algorithm}
\caption{BasicGreedy($G, k)$}
\label{alg:alg1}
\begin{algorithmic}[1]
\STATE Initialize a set $S = \phi$
\FOR{$i$ = $1\ to\  k$}
\FOR{each node $v\in V\backslash S$}
\STATE $\sideset{}{_{v}}{\mathop{\mathrm{sum}}} = 0$
\FOR{$j$ = $1\ to\ R$}
\STATE $\sideset{}{_{v}}{\mathop{\mathrm{sum}}} += |RanCas(S \cup {v})|$
\ENDFOR
\STATE $\sideset{}{_{v}}{\mathop{\mathrm{sum}}} = \sideset{}{_{v}}{\mathop{\mathrm{sum}}}/R $
\ENDFOR
\STATE $S = \{S \cup {arg\sideset{}{_{v\in V \backslash S}}{\mathop{\mathrm{\max}}} \{ \sideset{}{_{v}}{\mathop{\mathrm{s}}}\}}\}$
\ENDFOR
\RETURN $S$
\end{algorithmic}
\end{algorithm}

Algorithm~\ref{alg:alg1} shows the general basic algorithm based on hill-climbing strategy. In each round, the algorithm computes the additional influence spread of each node $v$ if node $v \notin S$ is activated. The function $RanCas(S \ cup \{ v\})$ is a random process and repeated $R$ times (Line 6) to simulate the process of real propagation. Then the node with max marginal gain is selected and added to the selected set $S$ (Line 8). The time complexity of $RanCas(S \cup \{ v\})$ is $(m)$, and thus the time complexity of Algorithm~\ref{alg:alg1} is $O(knRm)$, where $n$ and $m$ are the number of the nodes and edges in the network.

The following lemma show the property of the value function $\sigma(.)$.

\begin{lemma}
\label{le:NP}
The value function $\sigma(S)$ is submodular and monotone.
\end{lemma}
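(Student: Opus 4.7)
The plan is to follow the classical live-edge (possible worlds) argument of Kempe, Kleinberg and Tardos, adapted to the weighted setting. The propagation process on $G$ is equivalent, in distribution, to the following: independently for each edge $(u,v) \in E$, flip a biased coin and declare $(u,v)$ ``live'' with probability $p_{uv}$ and ``blocked'' otherwise, producing a random subgraph $X$ of $G$. For a fixed seed set $S$, the set of nodes eventually activated under the cascade is exactly the set $R(S,X)$ of nodes reachable from $S$ via directed paths of live edges in $X$. This coupling is standard and holds verbatim in WIC model, because the node weights $w_v$ play no role in the activation dynamics.

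Under this coupling, I would rewrite
\begin{equation*}
\sigma(S) \;=\; \mathbb{E}_{X}\!\left[\, \sum_{v \in R(S,X)} w_v \,\right] \;=\; \sum_{X} \Pr[X]\cdot \sigma_X(S),
\end{equation*}
where $\sigma_X(S) := \sum_{v \in R(S,X)} w_v$ is a deterministic function of $S$ for each realization $X$. Because a non-negative convex combination of monotone submodular functions is again monotone submodular, it suffices to prove that each $\sigma_X$ is monotone and submodular.

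For \emph{monotonicity} of $\sigma_X$, observe that if $S \subseteq T$ then any path of live edges from $S$ is also a path from $T$, so $R(S,X) \subseteq R(T,X)$; since every $w_v \ge 0$, we obtain $\sigma_X(S) \le \sigma_X(T)$. For \emph{submodularity}, fix $S \subseteq T \subseteq V$ and $v \notin T$. Let $\Delta_X(v \mid S) := \sigma_X(S \cup \{v\}) - \sigma_X(S)$ be the marginal gain, which equals the total weight of nodes in $R(\{v\},X) \setminus R(S,X)$. Since $R(S,X) \subseteq R(T,X)$, we have $R(\{v\},X)\setminus R(T,X) \subseteq R(\{v\},X)\setminus R(S,X)$, and summing the non-negative weights over the smaller set yields $\Delta_X(v \mid T) \le \Delta_X(v \mid S)$, which is the diminishing-returns condition.

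Taking expectation over $X$ preserves both properties, giving that $\sigma$ is monotone and submodular. The only step that is not purely bookkeeping is justifying the live-edge coupling for WIC; since the weights $w_v$ enter only additively after the reachable set is determined and do not influence the activation probabilities, this step is a direct reuse of the classical argument and I expect no genuine obstacle.
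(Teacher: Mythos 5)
Your proof is correct, but it takes a genuinely different route from the paper. The paper argues directly from the closed-form expression of $\sigma$: it writes the marginal gain of adding $v$ to $S$ as $\sum_{v_1 \in R(v)} w_{v_1}\, p_{vv_1}(1-p_{S,v_1})$, where $p_{S,v_1}$ is the reachability probability from the set $S$ to $v_1$, and then observes that $S\subseteq T$ implies $p_{S,v_1}\le p_{T,v_1}$, so the marginal gain can only shrink; monotonicity is read off from the non-negativity of these increments. You instead use the classical live-edge (possible-worlds) coupling of Kempe--Kleinberg--Tardos: decompose $\sigma$ as $\mathbb{E}_X[\sigma_X]$, note that each $\sigma_X$ is a non-negatively weighted coverage function of the reachable set $R(S,X)$, and invoke closure of monotone submodularity under non-negative combinations. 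The trade-off is that the paper's argument is shorter but rests on a marginal-gain formula whose product form $p_{vv_1}(1-p_{S,v_1})$ implicitly treats the event that $v$ reaches $v_1$ and the event that $S$ reaches $v_1$ as independent, which need not hold when the corresponding paths share edges; your decomposition sidesteps this entirely, since all set-theoretic reasoning happens inside a fixed deterministic realization $X$ and the weights enter only additively after the reachable set is determined. In that sense your argument is the more robust one, at the cost of introducing the coupling machinery; both establish the lemma as stated.
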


\begin{proof}
For all $v \in V$ and all subsets of $V$ where $S\subseteq T\subseteq V$, we define the successors of a node $v$ ($v\notin T$) as $R(v)$. The reachability probability from set $S$ and $T$to $v_1$ is $p_{S,v_1}$ and $p_{T,v_1}$. Then, according to the mathematical representation of $\sigma(\cdot)$ (Equation~\ref{equ:sigma}), we can obtain following equations:
\begin{equation}
\label{equ:equ2-1}
\sigma(S \cup v) - \sigma(S)=\sum_{v_1 \in R(v)}{w_{v_i}\cdot p_{vv_1}(1-p_{S,v_1})}
\end{equation}

Similarly, for $T$:
\begin{equation}
\label{equ:equ2-2}
\sigma(T \cup v) - \sigma(T)=\sum_{v_1 \in R(v)}{w_{v_i}\cdot p_{vv_1}(1-p_{T,v_1})}
\end{equation}
%there may exist a node $v_1$ $v_1 \in \bigcup_{u \in V}{R(u)} $ and $v_1 \notin \bigcup_{u \in S}{R(u)}$,

Since $S\subseteq T$, $p_{S,v_1} \leq p_{T,v_1}$, and the only difference between Equation~\ref{equ:equ2-1} and Equation~\ref{equ:equ2-2} is the reachability probability $p_{S,v_1}$ and $p_{T,v_1}$, $\sigma(S \cup {v}) - \sigma(S) \geq \sigma(T \cup {v}) - \sigma(T)$ holds.
Thus, non-negative real valued function $\sigma$ is $submodular$.

Moreover, since $\sigma(S)$ is the expectation of influence value generated by $S$, $\sigma(\emptyset)=0$ and the marginal increase of $\sigma$ is always larger than $0$.
For all $S \subseteq T$, $\sigma(S) \leq \sigma(T)$. Therefore, the value function $\sigma(S)$ is monotone.
\end{proof}

%For all $v \in V$ and all subsets of $V$ where $S\subseteq T\subseteq V$, we consider that a  $f$ is $submodular$ on subsets of $V$ if $\sigma(S \cup {v}) - sogma(S) \geq sigma(T \cup {v}) - sigma(T)$. $f$ is monotone if $f(S) \leqq  f(T)$ for all $S \subset T$. For a submodular and monotone function $f$ with $f(\phi) = 0$, the problem of finding a $k$ size subset $S$ that maximizes $f(S)$ can be approximated by maximizing the marginal gain. With this idea we develop a basic greedy algorithm shown as Algorithm~\ref{alg:algF}. In this algorithm, $R$ is a large constant that \footnote{show its function}

Since $\sigma$ is a submodular and monotone function with $\sigma(\emptyset) = 0$, the problem of finding a $k$ size subset $S$ that maximizes $\sigma(S)$ can be approximated by maximizing the marginal gain\cite{np-hard}. With this idea, we develop a basic greedy algorithm in Algorithm~\ref{alg:alg1}. In this algorithm, $R$ is a large constant that is enough to eliminate deviations from random processes.

%Greedy:1-1/e
\begin{theorem}
\label{the:1-1/e}
Algorithm~\ref{alg:alg1} yields $(1-1/e-\epsilon)$-approximate solutions where $e$ is the base number of the natural logarithm and $\epsilon$ is any real number which $\epsilon \geq 0$.
\end{theorem}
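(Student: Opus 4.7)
The plan is to combine two standard ingredients: the Nemhauser--Wolsey--Fisher guarantee for greedy maximization of monotone submodular set functions, and a Chernoff/Hoeffding-type concentration bound to absorb the Monte Carlo error of the $R$ simulations in $RanCas$ into the additive $\epsilon$ slack. From Lemma~\ref{le:NP} we already know that $\sigma$ is monotone and submodular, and it is immediate from the definition that $\sigma(\emptyset)=0$ and $\sigma$ is non-negative. Hence if we had exact access to the oracle $\sigma(\cdot)$, selecting $k$ elements by greedy marginal-gain maximization would yield a set $S$ with $\sigma(S)\ge (1-1/e)\,\sigma(S^{*})$ by the classical Nemhauser--Wolsey--Fisher theorem.

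The only gap between Algorithm~\ref{alg:alg1} and this idealized greedy is that Line~6 replaces $\sigma(S\cup\{v\})$ by the empirical average $\hat{\sigma}_R(S\cup\{v\})$ over $R$ independent runs of $RanCas$. The next step of my proof is therefore to control this sampling error. I would fix an accuracy parameter $\epsilon'>0$ and a confidence parameter $\delta>0$, observe that each run of $RanCas(S\cup\{v\})$ produces a bounded random variable (between $0$ and $\sum_{u\in V}w_u$), and apply a Hoeffding bound to conclude that for $R=\Theta(\epsilon'^{-2}\log(nk/\delta))$ one has $|\hat{\sigma}_R(T)-\sigma(T)|\le \epsilon'\cdot\sigma(S^{*})$ uniformly over all $T$ examined during the $k$ greedy rounds, with probability at least $1-\delta$. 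Taking $R$ large enough is exactly the role of ``a large constant that is enough to eliminate deviations from random processes'' mentioned after the algorithm.

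Conditioned on this uniform concentration event, the greedy procedure is choosing, at each step, an element whose true marginal gain is within additive $2\epsilon'\cdot\sigma(S^{*})$ of the best marginal gain. A standard induction on the $k$ greedy steps (the same telescoping argument that proves the $(1-1/e)$ bound, but carrying the additive error through) then yields
\begin{equation*}
\sigma(S)\ \ge\ \left(1-\frac{1}{e}\right)\sigma(S^{*})\ -\ c\,k\,\epsilon'\,\sigma(S^{*})
\end{equation*}
for a small constant $c$. Choosing $\epsilon'=\epsilon/(ck)$ turns this into the claimed $(1-1/e-\epsilon)$-approximation.

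The main obstacle is the second paragraph: carefully setting up the Hoeffding bound so that the single additive error term $\epsilon'$ per evaluation propagates cleanly into a single $\epsilon$ in the final approximation ratio, and verifying that the number of random variables we union-bound over is polynomial in $n$ and $k$ (so $R$ can be chosen polynomial in $1/\epsilon$ and $\log n$). The Nemhauser--Wolsey--Fisher step and the monotonicity/submodularity step are essentially plug-and-play once Lemma~\ref{le:NP} is in hand.
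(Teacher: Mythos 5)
Your proposal is correct and rests on the same key ingredients as the paper's proof --- Lemma~\ref{le:NP} (monotonicity and submodularity of $\sigma$) followed by the Nemhauser--Wolsey--Fisher guarantee for greedy maximization --- but it goes genuinely further than the paper's own argument. The paper's proof stops after invoking the exact-oracle $(1-1/e)$ bound and simply asserts the $(1-1/e-\epsilon)$ ratio in its final sentence, never explaining where the $\epsilon$ comes from; the Monte Carlo estimation in Lines 5--8 of Algorithm~\ref{alg:alg1} is swept into the remark that $R$ is ``a large constant.'' Your second and third paragraphs supply exactly this missing piece: a Hoeffding bound with a union bound over the $O(nk)$ sets evaluated during the greedy rounds, followed by the telescoping induction that carries an additive per-round error of $O(\epsilon'\sigma(S^{*}))$ through the $k$ rounds and absorbs it by choosing $\epsilon'=\epsilon/(ck)$. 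This is the standard way the $\epsilon$ term is justified in the influence-maximization literature, and it is the right way to complete the proof. Two small points to tighten: (i) each run of $RanCas$ returns a value in $[0,W]$ with $W=\sum_{u\in V}w_u$, so the Hoeffding sample bound is really $R=\Theta\bigl((W/(\epsilon'\sigma(S^{*})))^{2}\log(nk/\delta)\bigr)$; you should add the observation that $\sigma(S^{*})\geq\max_{v}w_v\geq W/n$, which keeps $R$ polynomial in $n$, $k$ and $1/\epsilon$. (ii) The resulting guarantee holds only on the concentration event, i.e.\ with probability $1-\delta$, so the conclusion is a high-probability statement rather than a deterministic one; neither your write-up nor the paper's theorem statement makes this explicit, and it should be.
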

\begin{proof}
According to Lemma~\ref{le:NP}, the objective function $f$ of WIM is submodular and monotone. Clearly, $f$ is also non-negative. Let $S$ be a set with size $k$ which selects one element with maximal marginal increase in the function value each time,
and $\sideset{}{^{*}}{\mathop{\mathrm{S}}}$ be an optimal set that maximizes the value of $f$ among all $k$-size sets.
According to \cite{G-77,G-78}, $f(S)\geq (1-1/e)\cdot f(\sideset{}{^{*}}{\mathop{\mathrm{S}}})$. Since Algorithm~\ref{alg:alg1} picks nodes to maximize the marginal increase, it achieves a $(1-1/e-\epsilon)$-approximation.
\end{proof}

Even though the complexity of the greedy algorithm is not satisfactory for large networks, it has a performance assurance and could be used as a baseline algorithm. More efficient algorithms will be proposed in the next section.

%==================================Section 3 WR algo
\section{Weight Reset Algorithm}
\label{sec:3}
Even though the greedy algorithm for WIM has constant approximation assurance, its time complexity prevents it from scaling to large social network. In numerous practical scenarios, the scale of online social network could be huge which cannot afford the heavy time complexity of greedy algorithm. Thus, we try to design a novel algorithm which can keep the time expenditure low and obtain an approximation as good as the greedy algorithm. In this section, we focus on the efficiency issue and present \textit{Weight Reset(WR)} algorithm, a novel influence maximization method to estimate the influence spread of nodes by resetting the weights of nodes.

%In WIC model, our final goal is to find the nodes which can generate the most influence value rather than have most connectivity. However, previous methods in general IC model are always use heuristic or computable variables to estimate the connectivity of nodes. This fact decides that the previous methods cannot perform well in WIC model. We should find another criteria to evaluate the influence value of nodes. Thus, in this section, we present \textit{Weight Reset(WR)} algorithm, a novel influence maximization method to estimate the influence spread of nodes by reseting the weights of nodes.

%\textcolor{blue}{
The node selection process is the main part of this algorithm.
Since we need to use some data structures to facilitate node selection process, we design a pretreatment process to prepare required parameters for the next step. Thus, at a high level, WR contains following two phases.%\footnote{the motivation is not sufficient}
%}

\begin{enumerate}
\item \textbf{Pre-treatment} This phase computes the reachability probability $p_{r}(u,v)$ for each pair of reachable nodes $u$ and $v$. The probabilities are organized into proper data structures to facilitate the phase of node selection.
\item \textbf{Node Selection} This phase selects $k$ nodes with the largest marginal value of $\sigma$(.) function iteratively. Once one node is selected into \textit{seed set}, the weight of this node is reset and the value expectations of relevant nodes are updated.%Then, the IVT and WDT of this node are updated to eliminate the expected influence of the activation of itself.
\end{enumerate}

We illustrate the \textit{Pre-treatment} phase and \textit{Node Selection} phase  %, and conclude the theoretical performance of WR algorithm
in Section~\ref{sec:pre} and Section~\ref{sec:selection}, respectively. Moreover, we optimize our algorithm and provide a free space for users to custom their personal trade-off between efficiency and effectiveness in Section~\ref{sec:bound}.

For ease of understanding this section, Table~\ref{tab:notation} summarizes the notations used in this section.
\begin{table}
\centering
\caption{Frequently used notations.}
\begin{tabular}{|c|l|} \hline
Notation&Description\\ \hline
$p_{uv}$ & the probability of edge $e_{u,v}$ \\ \hline
$p_{r}(u,v)$ & the reachability probability from $u$ to $v$ \\ \hline
$p_{i}(u,v)$ & the $i^{th}$ path from $u$ to $v$\\ \hline
$r_{uv}$ & the sum of the paths from $u$ to $v$\\ \hline
$w_{u}$ & the weight of node $u$ \\ \hline
$IVT(u)$ & the influence value tree of $u$ \\ \hline
$WDT(u)$ & the weight discount tree of $u$ \\ \hline
$V_{u}$ & the value of node $u$ \\ \hline
$W_{u}$ & the value generated by $u$ in $u$'s neighborhood\\ \hline
$O_{u}$ & the set of successors of $u$\\ \hline
$I_{u}$ & the set of predecessors of $u$ \\ \hline
$\theta$ & bound parameter of BWR algorithm \\ \hline
$\alpha$ & steps which influence can propagate \\ \hline
$\beta$ & performance bound of BWR \\ \hline
\end{tabular}
\label{tab:notation}
\end{table}

\subsection{Pre-treatment}
\label{sec:pre}

Pre-treatment phase computes the reachability probability for each pair of nodes and organizes such information into proper structure for efficient usage in the next step.
%. The other is to build $IVT(u)$ and $WDT(u)$ for each node $u \in V$.\footnote{what}. \footnote{motivation?}}

Given a \textit{WIC} model, the reachability of node $u$ and $v$ denoted by $p_{r}(u,v)$ is as follows.
\begin{displaymath}
p_{r}(u,v) = 1-\prod^{r_{uv}}_{i=1}(1-p_{i}(u,v))
\end{displaymath}

Intuitively, $p_{r}(u,v)$ is the probability that $u$ activates $v$ through all possible paths from $u$ to $v$.
Since each node has $weight$ as an additional attribute, we need to develop a particular mechanism to estimate $V_u$, the expectation of $u$'s influence, of each node $u$. Moreover, during node selection, if $u$ is selected, then we need to estimate the influence of $u$ on its predecessors and successors exactly.

%\footnote{add one sentence to emphasize it require inverted edge}

For a node $u$, the estimations of both $V_u$ and $u$'s influence on its neighbors require to access all successors and predecessors of $u$, respectively. Since all nodes in a tree can be visited from its root efficiently, we develop two tree structures to achieve these two goals. One includes all successors of $u$ to facilitate the process of acquiring and updating $V_u$ by given the initial node $u$ while the other contains all predecessors of $u$.

% Due to the difference of these two tasks, we develop two tree structures for them, respectively.
%In order to achieve these two goals, we design two tree structures for each node.
%We adopt two elaborate tree structures for each node since arborescence has a brilliant property: It can visit all nodes from its root efficiently. This is exactly what we need to do numerous times in our algorithm. }

%\textcolor{blue}{
Firstly, to estimate the influence value $V_u$ of $u$, we organize all successors of $u$ into a tree. By \textit{Breadth-First-Search} with $u$ as the root, if there is an edge $e_{u,v}$, $v$ is added into this tree as one child of $u$.
If a node is visited multiple times during the traversal due to multiple paths between a pair of nodes, we only update the reachability probability $p_{r}(u,v)$ rather than add another edge since this additional edge can cause ring in this tree. Although we ignore some structure information, we can still obtain the accurate $p_{r}(u,v)$ and compute the influence value V$_u$ for each node which are the main purposes of this process.

According to discussions above, we define such tree as Influence Value Tree as follows.
%IVT and WDT
\begin{definition}
(INFLUENCE VALUE TREE) For a node $u\in V$, the influence value tree of $u$, denoted by $IVT$($u$) is a weighted tree ($O_u$, $E_u$, $w$), where $O_u$ is the set of all successors of $u$, $E_u$ is the set of all edges from $u$ to $v$ if $v \in O_u$, and w is the weights of $v$.
%\begin{displaymath}
%IVT(u) = \{O_u,E_u,w| p_{r}(u,v_i)>0\}
%\end{displaymath}
\end{definition}

\begin{definition}
(VALUE OF IVT) For a node $u$, the expected value of a node $u\in V$, $V_{u}$, is
\begin{displaymath}
V_{u} = \sum^{|O_{u}|}_{i=1}p_{r}(u,v_{i})\cdot w_{v_{i}}
\end{displaymath}
\end{definition}
\begin{figure*}
\centering
\subfigure[Directed version of Figure 1]{
\label{fig:directed-sn} %% label for first subfigure
\includegraphics[totalheight=30mm]{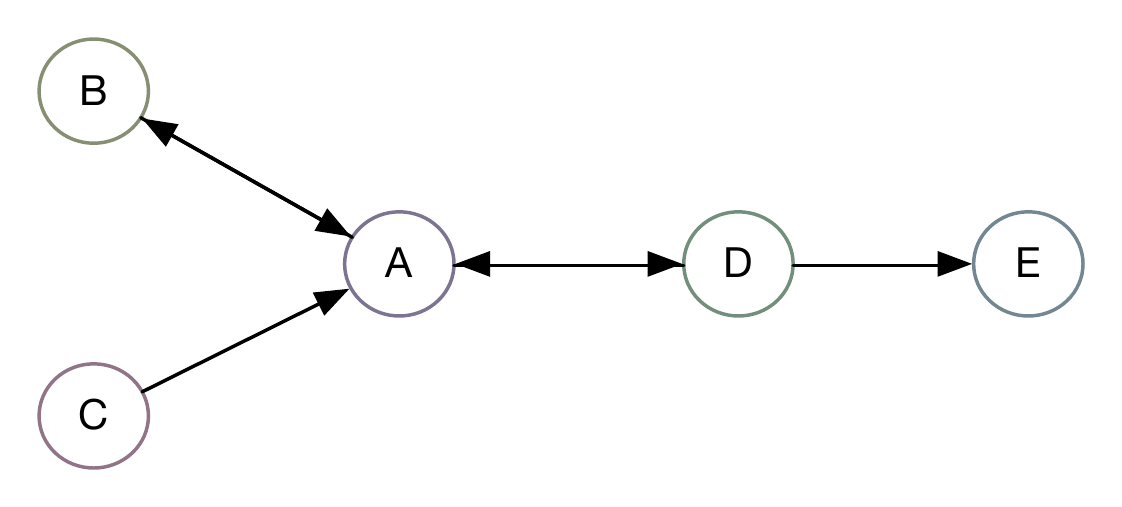}
}
\subfigure[IVT of Figure 1]{
\label{fig:IVT} %% label for second subfigure
\includegraphics[totalheight=30mm]{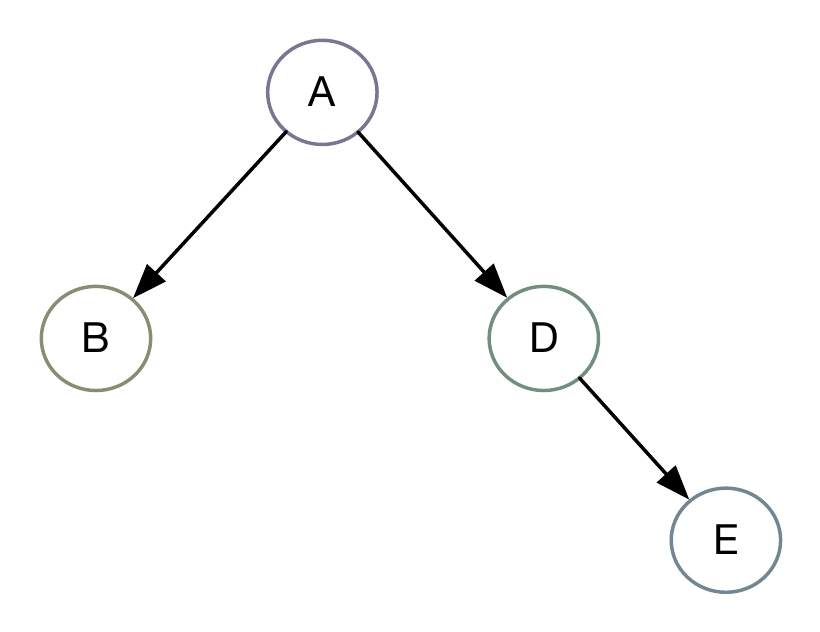}
}
\subfigure[WDT of Figure 1]{
\label{fig:WDT} %% label for second subfigure
\includegraphics[totalheight=30mm]{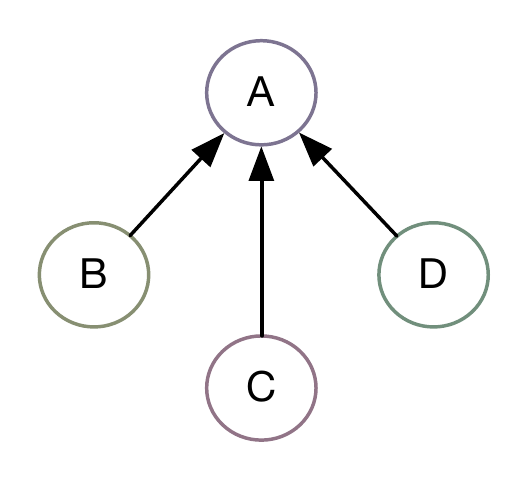}
}
\caption{Building IVT and WDT for Figure 1}
\label{fig:build trees} %% label for entire figure\includegraphics[%
\end{figure*}

%Besides, during node selection, if $u$ is selected, then we should eliminate the increment caused by $u$ on the influence value $V_{v^{'}}$ where $u \in IVT(v^{'})$. If we still use IVT tree to accomplish this calculation, we have to consider all $IVT(v^{'})$ where $u$ can arrive $v^{'}$. It will cost so much unnecessary time. Thus, we design another tree structure \textit{weight discount tree}(WDT) to overcome this trouble and evaluate the reduction on each $V_{v^{'}}$ due to the selection of $u$.
%Note that during node selection phase, in each iteration, a node $u$ with the largest marginal value of $\sigma$(.) function is picked and for such $u$, $V_u$ is to be
To estimate the influence on $u$'s predecessors, we organize all nodes that can reach $u$ into a tree called \textit{weight discount tree}(WDT).
To build WDT for each node, we do not have to BFS all node. Since all $p_r(u,v)$ are calculated after building IVT and the structure information of WDT will not be used, we build $WDT(u)$ by adding the node $v$ where $p_r(v,u) > 0$
with $u$ as the root and ignoring rings. However, different from IVT, if there is an edge $e_{v,u}$, $v$ is added into the WDT as one child of $u$. That means that in $WDT(u)$ a child node points to its father node to represent the direction of $e_{v,u}$.

%However, different from IVT, child node in $WDT(u)$ will point to its father node to represent the direction of $e_vu$. }\footnote{still not clear, should connect with the requirement}.The formal definition of WDT is shown as follows.
The formal definition of WDT is shown as follows.
\begin{definition}
(WEIGHT DISCOUNT TREE) For a node $u\in V$, $WDT(u)$ is a weighted tree ($I_u$, $E_u$, $w$), where $I_u$ is the set of all predecessors of $u$, $E_u$ is the set of all edges from $v$ to $u$ when $v \in I_u$, and w is the weights of all $v$. 
%The weight discount tree of $u$ is:
%\begin{displaymath}
%WDT(u) = \{I_u,E_u,w| p_{r}(v_i,u)>0\}
%\end{displaymath}
\end{definition}

\begin{definition}
(VALUE OF WDT) For a node $u$, the sum of the expected value which generated by $u\in V$ in its neighborhood is
\begin{displaymath}
W_{u} = \sum^{|I_{u}|}_{i=1}p_{r}(v_{i},u)\cdot w_{u}
\end{displaymath}
\end{definition}

In order to illustrate the IVT and WDT, consider Figure~\ref{fig:sn} again. If Figure~\ref{fig:sn} is transformed into a directed graph like Figure~\ref{fig:directed-sn}, the $IVT(A)$ and $WDT(A)$ of Figure~\ref{fig:directed-sn} are shown as Figure~\ref{fig:IVT} and Figure~\ref{fig:WDT}, respectively. When building IVT, $A$ is as the initial node of BFS and all successors of $A$ are added into $IVT(A)$. On the other hand, according to the possibility $p_r$, all predecessors of $A$ are added into $WDT(A)$.

The pseudo code of reachability probability computation of all node pairs is shown in Algorithm~\ref{alg:alg2}, which is a recursive algorithm. $u$ is the initial node and $v$ is the neighbor of $u$, and pathList is a list of the nodes which are between $u$ and $v$.
During the process of iteration, if $v$ is not in current $IVT(u)$, $v$ is added into $IVT(u)$.(Line 3). For each new path from $u$ to $v$, we update $p_{r}(u,v)$ according to previous $p_{r}(u,v)$.(Line 6)

%initial pr(u,v)
\begin{algorithm}
\caption{$genPr(u, v, pathList)$}
\label{alg:alg2}
\begin{algorithmic}[1]
\IF{$v\in pathList$ }
\RETURN
\ENDIF
\STATE add $v$ into pathList
\IF{$v\notin IVT(u)$}
\STATE add $v$ into $IVT(u)$
\ENDIF
\STATE $p_{r}(u,v) = 1-(1-p_{r}(u,v))(1-p_{r}(u,v^{'})\cdot p_{v^{'}v})$
\FOR{each neighbor $w$ of $v$}
\STATE $genPr(u, w, pathList)$
\ENDFOR
\end{algorithmic}
\end{algorithm}

We do not have to keep IVT and WDT for each node since they can be built rapidly with reachability possibility $p_r(u,v)$. So we just build IVT and WDT on demand to save space. In our implementations, we keep $p_r(u,v)$ as key-value pair where $p_r(u,v)>0$. Given that most nodes are not reachable to each other, more space is saved.

\subsection{Node Selection}
\label{sec:selection}
%\textcolor{blue}{
Node selection phase selects a set of $k$ nodes which can maximal $\sigma(\cdot)$. Node selection algorithm is introduced in this section. Such algorithm is based on the idea of hill-climbing and chooses the node to make maximal marginal increase of $\sigma(\cdot)$ in each round.

Algorithm~\ref{alg:alg3} presents the pseudo-code of WR's node selection algorithm. With the WDT and IVT of all nodes computed in the pre-treatment phase, the algorithm contains $k$ iterations(Line 2-12). In each iteration, the algorithm selects a node $u$ with the largest IVT value, and adds it into seed set $S$. After $k$ round iterations, $S$ is returned as the final result. The IVT value updating is the core of this algorithm. This step is introduced as follows.

\noindent \textbf{Updating WDT and IVT.} Once $u$ is added into $S$, we should estimate the expected influence generated by $u$ on $WDT(u)$ and $IVT(u)$.

First, we update $WDT(u)$ by eliminating the increment on $V_v$ caused by $u$ where $v$ is a predecessor of $u$.
In the algorithm, once a node is selected, the difficulty is how to eliminate the expected value generated by $u$ in $WDT(u)$ properly. Our solution is to reset $w_u$ into $0$(Line 6), and then calculate all $V_v$ if $v \in WDT(u)$ again(Line 7-8). This solution can perfectly reduce all the influence on $V_v$ since $u$ cannot influence $V_v$ with $w_u = 0$.

Then, for each node $v^{'}$ in $IVT(u)$, we discount its weight because if $u$ is activated, $u$ activates these nodes with probability $p_{r}(u,v^{'})$ (Line 9-10). At last, the IVT value for each node is updated(Line 11).

%Node selection
\begin{algorithm}
\caption{NodeSelection($G, k$)}
\label{alg:alg3}
\begin{algorithmic}[1]
\STATE Initialize  $S = \phi$
\FOR{$i$ = $1\ to\  k$}
\STATE select $u$ = $arg\max_{v\in V\backslash S}{V_{v}}$
\STATE add $u$ into $S$
\STATE/* update the value of others nodes*/
\STATE $w_{u} = 0$
\FOR{each node $v\in \ WDT(u)$}
\STATE recompute the $V_{v}$
\ENDFOR
\FOR{$v^{'}\in\ IVT(u)$}
\STATE $w_{v^{'}} = (1-p_{r}(u,v^{'}))\cdot w_{v^{'}}$
\ENDFOR
\STATE update value V for each node
\ENDFOR
\RETURN $S$
\end{algorithmic}
\end{algorithm}

Then we analyze the approximation ratio bound and the complexity of the proposed algorithm.

\noindent \textbf{Approximation Ratio Bound.}
The WIM problem can be reduced into Set Cover problem(Theorem~\ref{the:NP}). Moreover, WR algorithm is based on hill-climbing strategy since it selects seed node with maximal marginal increase in each round. Thus, we can conclude the approximation ratio bound of WR algorithm.
\begin{theorem}
\label{the:MC}
WR algorithm can achieve $(1-1/e)$-approximate ratio for the influence maximization problem in WIC model.
\end{theorem}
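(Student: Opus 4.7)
The plan is to show that WR is operationally the same hill-climbing greedy procedure analyzed in Theorem~\ref{the:1-1/e}, and then invoke the submodularity/monotonicity of $\sigma$ from Lemma~\ref{le:NP} together with the classical Nemhauser--Wolsey--Fisher bound to conclude a $(1-1/e)$-approximation. So the work splits into (i) citing the abstract greedy guarantee, and (ii) verifying that the quantity $V_v$ which WR maximizes in iteration $i$ is exactly the marginal gain $\sigma(S_{i-1}\cup\{v\})-\sigma(S_{i-1})$, where $S_{i-1}$ is the seed set produced after the first $i-1$ rounds.

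For (i) I would begin by recalling Lemma~\ref{le:NP}, which states that $\sigma$ is monotone and submodular, and that clearly $\sigma(\emptyset)=0$. By the same appeal used in Theorem~\ref{the:1-1/e} to \cite{G-77,G-78}, any algorithm that at each of its $k$ rounds adds to the current set the element of maximum marginal value under $\sigma$ achieves at least $(1-1/e)\cdot \sigma(S^\star)$, where $S^\star$ is an optimal seed set of size $k$. Thus the entire theorem reduces to the correspondence claim in (ii).

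For (ii) I would prove by induction on $i$ that at the start of iteration $i$ the value $V_v$ maintained by WR equals $\sigma(S_{i-1}\cup\{v\})-\sigma(S_{i-1})$ for every $v\notin S_{i-1}$. The base case is immediate from the definitions of $IVT(v)$ and $V_v$, which reproduce $\sigma(\{v\})$ using the original weights. For the inductive step, once WR picks the node $u_i$ with the largest $V$ value and adds it to $S$, two updates are applied: first $w_{u_i}\leftarrow 0$ followed by recomputation of $V_v$ for every $v\in WDT(u_i)$, and second the multiplicative discount $w_{v'}\leftarrow (1-p_r(u_i,v'))w_{v'}$ for every $v'\in IVT(u_i)$. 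The first update removes exactly the contribution that $u_i$'s weight made to each predecessor's score, since a predecessor no longer earns credit for activating a node that is already seeded; the second update rewrites the residual weight at every successor $v'$ of $u_i$ as the probability that $u_i$ fails to activate $v'$ times the previous weight, which by the independence-of-paths convention used to define $p_r$ is precisely the expected value still available for a future seed to capture at $v'$. Combining both updates and recomputing $V_v$ gives the updated marginal $\sigma(S_i\cup\{v\})-\sigma(S_i)$, closing the induction.

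The main obstacle I expect is the bookkeeping in the inductive step: one must argue carefully that $WDT(u_i)\cup IVT(u_i)$ is large enough to cover every node whose marginal gain is affected by adding $u_i$ (so that $V_v$ for any $v$ outside this set is automatically still correct), and that the two updates compose in the right order so that the predecessors of $u_i$ see the zeroed weight before the discount is applied. Once this bookkeeping is settled, the theorem follows immediately by chaining the inductive invariant at round $k$ with the NWF bound from step~(i).
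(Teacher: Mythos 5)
Your proposal is correct and follows essentially the route the paper intends: the paper omits the proof of Theorem~\ref{the:MC}, stating only that it is ``similar to that of Theorem~\ref{the:1-1/e}'', i.e., invoke Lemma~\ref{le:NP} and the classical greedy bound of \cite{G-77,G-78}. Your step~(ii) --- verifying that the maintained quantity $V_v$ equals the current marginal gain of $\sigma$ under the paper's independence-of-paths conventions --- is exactly the bookkeeping the paper leaves implicit, and supplying it makes the argument more complete than the paper's own treatment.
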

%\begin{proof}
For the interests of space, we omit the proof of the theorem, which is similar as that of Theorem~\ref{the:1-1/e}.%}
%In MC problem, the hill-climbing algorithm can achieve following approximation:\begin{displaymath}wt(GREEDY)\geq(1-1/e)wt(OPT)\end{displaymath}Considering that WR algorithm can be viewed as a solution of maximum coverage problem based on hill-climbing strategy, so it can achieve the same approximation ratio bound with the hill-climbing algorithm in MC.%\textcolor{red}{Given a directed graph $G$ with $n$ nodes, each node $u$ in $G$ represents a set of nodes which can be influenced by $u$(including $u$ itself)}\footnote{what is the formal definition? the concept is not clear}. The sets may have some elements in common. Now our goal is to select at most $k$ nodes to ensure the union of these $k$ sets represented by $k$ selected nodes has maximal size. \textcolor{red}{If the set $A$, which can cover all nodes, exists, the \textit{maximum coverage} problem can be solved}\footnote{??}. This is a special case of the influence maximization problem in general \textit{IC} model where the propagation probability of each edge is $1$. Thus, for more general model-\textit{WIC}, this relation also holds water. We introduce the \textit{maximum coverage} problem because of one of its property:
%\end{proof}
%maximum coverage --> 1-1/e

\noindent \textbf{Time and Space Complexity.}
Assuming that $n_{out} = \max_{u\in V}\{|IVT(u)|\}$ and $n_{in} = \max_{u\in V}\{|WDT(u)|\}$, the process of building IVT and WDT can be viewed as the Breadth-First-Search from each node. Then running time of IVT computation is various since if the network is dense, the BFS can traverse the whole graph within $\log n$ steps whereas if the node has no neighbor, it will take almost no time to build its IVT. So, the average running time of building IVT and WDT for every node(Algorithm~\ref{alg:alg2}) is far less than $O(n_{in}\cdot n)$.

Note that we need keep the reachability probability $p_{r}(u,v)$ for all pairs of reachable nodes since the computation of IVT and WDT need $p_{r}(u,v)$ all the time. Furthermore, it costs $O(n)$ to keep the value $V_u$ of each node $u$. Thus, the space complexity in pre-treatment is $O(n^{2})$.

In \textit{Node Selection}, in each round of selection, it selects the node $u$ with the largest value. Then it updates $IVT(u)$ and $WDT(U)$. It costs $O(n)$ to select a node with maximal value in $V$ and $O(2n_{in})$ to update IVT and WD, and $O(n^{2}_{in})$ to update $V_{v}$ whose $IVT(v)$ has changed. Thus, the running time of Node Selection is $O(k(n + 2n_{in}+ n^{2}_{in}))$. This phase require no extra space since it just updates the outdated node value. Therefore, the total time complexity of WR algorithm is $O(n_{in}\cdot n + k(n + 2n_{in}+n^{2}_{in}))$.

According to the time complexity of \textit{Weighted Reset} algorithm, it is obvious that the $O(n_{in}\cdot n)$ is the most costly part. $n_{in}$ is related to some attributes of network. If a network is dense, the IVTs and WDTs are very large. Thus, WR algorithm performs better when IVT and WDT are small which means $n_{in}$ and $n_{out}$ are extremely smaller than $n$. Unfortunately, real-world social networks are not always sparse~\cite{sw}. In such cases, WR algorithm is inefficient. Therefore, we present bounded \textit{WR} algorithm which can handle the networks with high density within tiny loss in accuracy.

\subsection{Bounded Weighted Reset Algorithm}
\label{sec:bound}
In this subsection, we improve the practical performance of WR algorithm. Even though this algorithm is developed for WIM, it is still suitable for IM on IC model.

According to the analysis in Section~\ref{sec:selection}, Pre-treatment process is costly, whose time complexity is $O(n_{in}\cdot n)$. Since $n$ is fixed, we should reduce $n_in$ to increase the efficiency of WR algorithm.
%However, $n_{i}$ can be as large as $\log n$, which will make WR algorithm inefficient.
Moreover, after several steps in each iteration in Algorithm~\ref{alg:alg2}, the reachability probability may get too small to influence the node selection order. It is meaningless to waste time on such steps. Based on this observation, we use a threshold $\theta$ to bound the volume of IVT and WDT. By this strategy, we can achieve the high performance of WR even for dense networks. The bounded IVT and WDT are defined as follows.

%bounded IVT WDT
\begin{definition}
(BOUNDED IVT AND WDT) For a node $u\in V$, $V_{u}$ and threshold $\theta$, the bounded influence value tree of $u$ is:
\begin{displaymath}
BIVT(u,\theta) = \{v|v\in V,p_{r}(u,v)>\theta\}
\end{displaymath}
Correspondingly, the bounded weight discount tree of a node $u$ is:
\begin{displaymath}
BWDT(u,\theta) = \{v|v\in V,p_{r}(v,u)>\theta\}
\end{displaymath}
\end{definition}

The pretreatment of BWR is shown in Algorithm~\ref{alg:alg4}. After computing the reachability probability $p_{r}(u,v)$, we anticipate the $p_{r}(u,w)$ for the next node.(Line 8-10) If $p_{r}(u,w)<\theta$, we assume that $u$ and $w$ are not reachable and stop the iteration. After several iterations, $p_{r}(v,w)$ gets too small to have much influence on the selection order. Furthermore, the estimation of these tiny differences is extremely costly since the number of reachable neighbors grows exponentially with the increase of the iterations. Thus, neglecting these tiny reachability probability is a reasonable choice. Moreover, in each iteration, we pre-compute the $p_{r}(u,w)$ of the next iteration(Line 9) to decide whether we start the next iteration. If $p_{r}(u,w)$ is small, we do not start next iteration. This approach can further save more unnecessary calculations.

Thus, by cutting the nodes with negligible reachability probability, we can save numerous effort in initialization process especially in next few steps. Such conclusion is shown theoretically as follows.

%\footnote{what is the relationship between these paragraphs}
%\textcolor{blue}{
According to the analysis in Section~\ref{sec:selection}, the time complexity of WR algorithm is $O(n_{in}\cdot n + k(n + 2n_{in}))$. With $\theta$, we can bound IVT and WDT in a small size since the nodes with low reachability probabilities in original IVT and WDT are cut out. Then a smaller $n_{in}$, which is significantly smaller than $n$, can reduce the complexity effectively. By keeping $n_{in}$ small, we sacrifice a little accuracy, but save much running time.

\begin{algorithm}
\caption{$genPr(u, v, pathList, \theta)$}
\label{alg:alg4}
\begin{algorithmic}[1]
\IF{$v\in pathList$ }
\RETURN
\ENDIF
\STATE add $v$ into pathList
\IF{$v\notin IVT(u)$}
\STATE add $v$ into $IVT(u)$
\ENDIF
\STATE $p_{r}(u,v) = 1-(1-p_{r}(u,v))(1-p_{r}(u,v^{'})\cdot p_{v^{'}v})$
\FOR{each neighbor $w$ of $v$}
\IF{$p_{r}(u,v)\cdot p_{v,w} > \theta$ }
\STATE $genPr(u, w, pathList, \theta)$
\ELSE \RETURN
\ENDIF
\ENDFOR
\end{algorithmic}
\end{algorithm}

Our complete Bounded Weight Reset(BWR) algorithm for WIC model is proposed in Algorithm~\ref{alg:alg5}. In initialization step, we set each node as the initial node $u$ (see Algorithm~\ref{alg:alg4}) to obtain the bounded reachability probability $p_{r}(u,v)$ for all pairs of nodes with $p_{r}(u,v)>\theta$ (Line 4-5). Then, the node selection process(Line 7-19) starts. In each round, Algorithm~\ref{alg:alg5} selects a node $u$ with maximal influence value and add it into the seed set $S$. Then, for each node $v$ in $BWDT(u)$, the expectation value of activating $u$ is eliminated(Line 11-13). For each node $v^{'}$ in $BIVT(u)$, their weights are reseted(Line 14-15). Furthermore, it resets the value of $u$ to $0$ (Line 16). Finally, since the weights of many nodes are changed, the values of nodes in $BIVT(u)$ are recomputed (Line 17-19).
\begin{algorithm}
\caption{$BWR(G, k, \theta)$}
\label{alg:alg5}
\begin{algorithmic}[1]
\STATE /*Initialization*/
\STATE set $S = \phi$
\STATE set each $p_{r}(u,v) = 0$
\FOR{each node $v\in V$}
\STATE $genPr(v, new\ pathList,v,\theta)$
\ENDFOR
\STATE /*main loop*/
\FOR{$i$ = $1\ to\  k$}
\STATE select $u$ = $arg\max_{v\in V\backslash S}{V_{v}}$
\STATE add $u$ into $S$
\STATE/* update the value of others nodes*/
\FOR{each node $v\in \ BWDT(u)$}
\STATE /* remove the expectation value of u*/
\STATE $V_{v} = V_{v} - (w_{u}\cdot p_{r}(v,u))$
\ENDFOR
\FOR{$v^{'}\in\ BIVT(u)$}
\STATE $w_{v^{'}} = (1-p_{r}(u,v^{'}))\cdot w_{v^{'}}$
\ENDFOR
\STATE $w_{u} = 0$
\FOR {each node $v^{'}$ where $BIVT(v^{'})$ has changed}
\STATE /*update value V for each node*/
\STATE  $V_{v^{'}} = \sum^{O_{v^{'}}}_{i=1}p_{r}(v^{'},v_{i})\cdot w_{v_{i}}$
\ENDFOR
\ENDFOR
\RETURN $S$
\end{algorithmic}
\end{algorithm}

Considering the $WIC$ model, $p_{e}$ is the probability of edge $e$, $d_{i}$ is the out-degree of node $i$ and $w_i$ is the weight of node $i$. In order to simplify this expression, we set that $v_{i_{j}}$ is the $j$ hops neighbor of $v$, $p_{i_{j}}$ is the possibility of edge $e_{v_{i_{j-1}},v_{i_{j}}}$, $d_{i_{j}}$ is the out-degree of $v_{i_{j}}$ and $w_{i_{j}}$ is the weight of $v_{i_{j}}$. Then, we can obtain the value of a node $v$ as:

\[\begin{split} 
V_{v} & = \sum^{d_{v}}_{i_1=1}p_{v_{i_1}}w_{i_1} + \sum^{d_v}_{i_1=1}\sum^{d_{i_1}}_{i_2=1}p_{i_1}p_{i_2}w_{i_2} + ... +\\
 & \sum^{d_v}_{i_1=1}\sum^{d_{i_1}}_{i_2=1} \cdots \sum^{d_{i_{\alpha-1}}}_{i_{\alpha}=1} p_{i_1}p_{i_2} \cdots p_{i_{\alpha}}w_{i_2}\end{split}\]

Then we analyze the expectation of $V_{v}$. Assuming the probability $p_{e}$, $d_{i}$ and $w_i$ are independent to each other, we can estimate the expectation of $V_{v}$ as:

\[\begin{split}
E[V_{v}] & =\sum^{d_{v}}_{i_1=1}E[p_{v_{i_1}}]E[w_{i_1}] + \cdots +\\
 & + \sum^{d_{i_{\alpha-1}}}_{i_{\alpha}=1} E[p_{i_1}]\cdot E[p_{i_2}] \cdots E[p_{i_{\alpha}}]\cdot E[w_{i_2}]
\end{split}
\]

Supposing nodes are independent, by treating $p_i$, $d_i$ and $w_i$ as random variables, all nodes share same $E[p_i]$, $E[d_i]$ and $E[w_i]$, which are denoted by $p$, $d$ and $w$, respectively. We can simplify $E[V_v]$ again:
%\begin{lemma}
%\label{lem:Vv}
The expectation of $V_{v}$ is:
\begin{equation}
\label{equ:V}
E[V_{v}] = (p\cdot d + p^{2}\cdot d^{2} + ... +  p^{\alpha}\cdot d^{\alpha})\cdot w
\end{equation}

Since we bound the value $V_{v}$ by $\theta$, there exists $\alpha^{'}$ such that
\begin{equation}
\label{equ:alpha}
E[\alpha^{'}]=E[log_{p_i}\theta]=log_{p}\theta
\end{equation}

Similarly, the expectation of bounded value $V^{'}_{v}$ is:

$E[V^{'}_{v}] = (p\cdot d + p^{2}\cdot d^{2} + ... +  p^{log_{p}\theta}\cdot d^{log_{p}\theta})\cdot w$
%\end{lemma}

Then, we have the following lemma.

To estimate the ratio bound of BWR, we compare its result with that of WR at first.

\begin{lemma}
\label{lem:v/v}
The excepted solution of WR, $y^{*}$, and the excepted solution of BWR, $z^{*}$, satisfy $\dfrac{y^{*}}{z^{*}}\leq \dfrac{1-(pd)^{\alpha}}{1-(pd)^{\alpha^{'}}}$.
\end{lemma}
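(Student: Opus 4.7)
The plan is to reduce the lemma to a comparison of two geometric sums. Using Equation~\ref{equ:V} and the definition of $\alpha'$ in Equation~\ref{equ:alpha}, the expected per-node influence values in WR and BWR are closed-form partial geometric series in the ratio $r := pd$. Since the greedy node-selection mechanism in both algorithms picks $k$ nodes and then aggregates their (weight-reset updated) influence values, I expect $y^{*}$ and $z^{*}$ to decompose as sums over the chosen seeds, each summand being one copy of $E[V_v]$ or $E[V'_v]$ respectively (under the standing assumption that $p$, $d$ and $w$ are i.i.d.\ across nodes). Once this decomposition is in place, the ratio reduces to the ratio of two single-node geometric sums, which simplifies to the claimed bound.

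Concretely, I would proceed as follows. First, apply the geometric-sum identity
\[
E[V_v] \;=\; w\sum_{i=1}^{\alpha}(pd)^{i} \;=\; w\cdot\frac{pd\bigl(1-(pd)^{\alpha}\bigr)}{1-pd},
\]
and the analogous identity for $E[V'_v]$ with upper limit $\alpha'=\log_{p}\theta$. Second, observe that under the i.i.d.\ assumption every node has the same expected influence value, so the expected total value achieved by selecting any $k$ seeds is $k$ times this quantity, up to the weight-reset correction. Third, note that the weight-reset step is identical in WR and BWR, so it acts as a common multiplicative factor in both $y^{*}$ and $z^{*}$ and cancels in the ratio. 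Fourth, divide $E[y^{*}]$ by $E[z^{*}]$: the $w$, the $k$, and the $\frac{pd}{1-pd}$ factors all cancel, leaving exactly
\[
\frac{1-(pd)^{\alpha}}{1-(pd)^{\alpha'}}.
\]

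The main obstacle is justifying the inequality rather than an equality. The issue is that WR and BWR need not pick the same seed set: BWR uses truncated reachability estimates, so its selected seeds are guaranteed to be at least as good \emph{under the truncated objective} but may be suboptimal under the true, unbounded objective. I would handle this by arguing that $y^{*}$, the optimal expected WR value, upper-bounds the expected true value of BWR's chosen set (since the WR-greedy selection is $(1-1/e)$-optimal by Theorem~\ref{the:MC}), while $z^{*}$, the expected BWR value, lower-bounds that same quantity (because truncation only removes non-negative terms from the influence sum). These two monotone inequalities, combined with the closed-form ratio above, yield $y^{*}/z^{*}\le (1-(pd)^{\alpha})/(1-(pd)^{\alpha'})$ rather than equality. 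If instead the authors intend $y^*$ and $z^*$ as the idealized expected per-node values (as the derivation of Equation~\ref{equ:V} suggests), the inequality collapses to an equality and the argument is purely algebraic; I would therefore make clear at the outset which interpretation I adopt, since this determines whether the proof is a one-line algebraic computation or requires the above monotonicity sandwich.
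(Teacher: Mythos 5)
Your proposal is essentially the paper's own argument: decompose $y^{*}$ and $z^{*}$ into per-seed expected values, use the i.i.d.\ assumption so that the ratio of sums collapses to the single-node ratio $E[V_v]/E[V'_v]$, and evaluate the two partial geometric series in $pd$ so that the $w$ and $\frac{pd}{1-pd}$ factors cancel, leaving $\frac{1-(pd)^{\alpha}}{1-(pd)^{\alpha'}}$. The only substantive difference is how the mismatch between the two seed sets is handled: the paper uses the chain $E[\sum_{v\in S} V_{v}] \geq E[\sum_{v\in S^{*}} V'_{v}] \geq E[\sum_{v\in S} V'_{v}]$ (BWR's set is at least as good as WR's set under the truncated objective, and truncation only decreases values), whereas your appeal to the $(1-1/e)$-approximation of Theorem~\ref{the:MC} does not actually yield the needed upper bound on BWR's set value --- but since the paper's own chain is what you should substitute there, and it is at the same level of rigor as the rest of your sketch, the approach is the same.
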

\begin{proof}

According the analysis above, the excepted solution of BWR $z^{*}$ is $E[\sum_{v\in S^{*}} V_{v}]$ and that of WR $y^{*}$ is $E[\sum_{v\in S}{V_{v}}]$, with $S$ denoting the result set of WR and $S^*$ denoting the result set of BWR. We have the following relations.

\begin{displaymath}
E[\sum_{v\in S} V_{v}] \geq E[\sum_{v\in S^{*}} V^{'}_{v}] \geq E[\sum_{v\in S} V^{'}_{v}]
\end{displaymath}

Comparing the $z^{*}$ and $y^{*}$, we can get:
\begin{eqnarray}
\label{equ:y_z}
\dfrac{y^{*}}{z^{*}} & = & \dfrac{E[\sum_{v\in S} V_{v}]}{E[\sum_{v\in S^{*}} V_{v}]} \leq \dfrac{E[\sum_{v\in S} V_{v}]}{E[\sum_{v\in S} V^{'}_{v}]} = \dfrac{\sum_{v\in S} E[V_{v}]}{\sum_{v\in S} E[V^{'}_{v}]}\\ 
\label{equ:v/v} & \leq & \dfrac{V_{v}}{V^{'}_{v}} = \dfrac{1-(pd)^{\alpha}}{1-(pd)^{\alpha^{'}}}
\end{eqnarray}
\end{proof}

Theorem~\ref{the:fptas} shows that BWR algorithm is an FPTAS and the trade-off between running time and accuracy could be turned by setting $\theta$.

\begin{theorem}
\label{the:fptas}
BWR algorithm is a fully polynomial-time approximation scheme for the influence maximization problem in WIC model.
\end{theorem}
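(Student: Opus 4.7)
The plan is to establish the two defining properties of an FPTAS: that for any $\epsilon>0$, the algorithm returns a $(1-\epsilon)$-factor approximation of WR's expected output (which itself is $(1-1/e)$-optimal by Theorem~\ref{the:MC}), and that its running time is polynomial in both the input size and $1/\epsilon$. The bridge between the user-facing tolerance $\epsilon$ and the internal threshold $\theta$ is provided by Lemma~\ref{lem:v/v} together with the relation $\alpha' = \log_p \theta$ derived earlier in this subsection.

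For the accuracy side, I would rearrange Lemma~\ref{lem:v/v} as
\[
z^{*} \;\geq\; y^{*}\cdot \frac{1-(pd)^{\alpha'}}{1-(pd)^{\alpha}} \;\geq\; (1-1/e)\cdot\mathrm{OPT}\cdot \frac{1-(pd)^{\alpha'}}{1-(pd)^{\alpha}}.
\]
In the subcritical regime $pd<1$ (the regime in which the expected influence in Equation~\ref{equ:V} is bounded), the ratio tends to $1$ as $\alpha'\to\infty$, i.e.\ as $\theta\to 0$. To achieve the desired $(1-\epsilon)$-factor it suffices to impose the cleaner sufficient condition $(pd)^{\alpha'} \leq \epsilon$, which, using $\alpha' = \log_p \theta$, rearranges to the explicit prescription $\theta \leq \theta_\epsilon := p^{\log\epsilon/\log(pd)}$. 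Any threshold at or below $\theta_\epsilon$ then forces BWR to return at least $(1-\epsilon)(1-1/e)\cdot\mathrm{OPT}$ in expectation.

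For the running time, I would invoke the BWR complexity $O(n_{in}\cdot n + k(n+2n_{in}))$ from Section~\ref{sec:selection} and bound $n_{in}$ via the depth of the pruned trees. Because every node kept in $BIVT(u,\theta)$ or $BWDT(u,\theta)$ has reachability at least $\theta$ to or from $u$, no path used in their construction exceeds $\alpha' = \log_p \theta$ hops, so $n_{in} \leq d^{\alpha'} = (1/\theta)^{\log d/|\log p|}$. Substituting $\theta=\theta_\epsilon$ yields $n_{in} \leq (1/\epsilon)^{\log d/|\log(pd)|}$, a polynomial in $1/\epsilon$ of instance-dependent but bounded degree, so that the overall cost is polynomial in $n$, $k$, and $1/\epsilon$, which is exactly what FPTAS demands.

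The main obstacle I expect is making the exponent in the $n_{in}$ bound rigorously polynomial rather than exponential. This hinges on the subcritical assumption $pd<1$: only then is $|\log(pd)|$ bounded away from $0$, so the exponent $\log d/|\log(pd)|$ is a finite constant and $(1/\epsilon)^{\log d/|\log(pd)|}$ qualifies as a genuine polynomial in $1/\epsilon$. As $pd\to 1$ the exponent blows up and the scheme degrades into a merely pseudopolynomial one, so this assumption (already implicit in the finiteness of $E[V_v]$ used in the proof of Lemma~\ref{lem:v/v}) has to be stated and used explicitly to close the argument.
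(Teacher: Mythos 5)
Your proposal is correct in structure and follows the same skeleton as the paper's proof: both hinge on Lemma~\ref{lem:v/v} to relate the BWR and WR expectations, use the identity $\alpha'=\log_p\theta$ to translate the threshold into a truncation depth, and bound the running time by estimating $n_{in}=O(d^{\alpha'})$ and showing this is polynomial in $1/\epsilon$. Where you genuinely diverge is in the parameter-setting step and the form of the final guarantee. The paper fixes $\theta\leq\bigl(1-\tfrac{1-(pd)^{\alpha}}{(1-1/e)(1+\epsilon)}\bigr)^{1/(1+1/\log_d p)}$ and concludes $x^*/z^*\leq 1+\epsilon$, i.e.\ it folds the greedy loss into $\epsilon$; but since $\tfrac{1-(pd)^{\alpha}}{1-(pd)^{\alpha'}}\geq 1$ whenever $\alpha'\leq\alpha$ and $pd<1$, the paper's required inequality $1-(pd)^{\alpha'}\geq\tfrac{1-(pd)^{\alpha}}{(1-1/e)(1+\epsilon)}$ is only satisfiable when $(1-1/e)(1+\epsilon)\geq 1$, i.e.\ $\epsilon\gtrsim 0.58$, so its prescription degenerates for small $\epsilon$. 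Your cleaner sufficient condition $(pd)^{\alpha'}\leq\epsilon$ avoids this: it is feasible for every $\epsilon>0$, keeps the $(1-1/e)$ factor explicit in the guarantee $z^*\geq(1-\epsilon)(1-1/e)\cdot\mathrm{OPT}$ (the standard meaning of an FPTAS relative to the greedy ratio), and yields the same qualitative running-time bound $n_{in}\leq(1/\epsilon)^{\log d/|\log(pd)|}$. You also make explicit the subcritical assumption $pd<1$ that the paper leaves implicit (its separate $pd>1$ branch manipulates inequalities whose sides are negative, and the geometric series for $E[V_v]$ is only controlled when $pd<1$); flagging that the exponent blows up as $pd\to 1$ is a genuine improvement in rigor over the published argument, and the claim is correspondingly weaker but defensible. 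In short: same route, sounder algebra.
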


\begin{proof}
We set the parameter $\theta$ of BWR as follows.
\begin{equation}
\label{equ:theta}
\theta \leq (1-\dfrac{1-(p d)^{\alpha}}{(1-1/e)(1+\epsilon)})^{\dfrac{1}{1+\frac{1}{\log_{d}{p}}}}
\end{equation}
, where $e$ is the base number of the natural logarithm.
Thus, $\theta$ is only relevant to $\epsilon$ since other symbols in Equation~\ref{equ:theta} are constants.% which are only relevant to the structure of the network from input.
With the setting of $\theta$, we attempt to prove that the ratio bound of BWR is $1+\epsilon$ and the time complexity is polynomial of $n$ and $\frac{1}{\epsilon}$.

%To transform Equation~\ref{equ:theta}, we should make two ends of the Equality~\ref{equ:theta} become into the $1+\frac{1}{\log_{d}{p}}$ power of themselves.

We consider two conditions, $p\cdot d \leq 1$ and $p\cdot d >1$, since different value of $p\cdot d$ can lead different transformation of Equality~\ref{equ:theta}.

Firstly, when $p\cdot d \leq 1$, $\log_{d}p \leq -1$. Thus, $1+\frac{1}{\log_{d}{p}} \geq 0$. Since $\theta < 1$,
we obtain:
\begin{eqnarray}
\dfrac{1-(p d)^{\alpha}}{(1-1/e)(1+\epsilon)} \leq 1-\theta^{1+\frac{1}{\log{d}{p}}} & = & 1- p^{\log_{p}\theta}\cdot d^{\log_{p}\theta}\label{equ:pd<1}\\
& = & 1- (pd)^{\alpha^{'}}
\end{eqnarray}

Secondly, when $p\cdot d >1$, $-1>\log_{d}p>0$ which leads to $1+\frac{1}{\log_{d}{p}}<0$, thus :
\begin{equation}
\label{equ:pd>1}
\dfrac{1-(p d)^{\alpha}}{(1-1/e)(1+\epsilon)} \geq 1-\theta^{1+\frac{1}{\log{d}{p}}} = 1- (pd)^{\alpha^{'}}
\end{equation}

%\footnote{prove that $1-(p_{a}d_{a})^{\alpha^{'}}\leq \frac{1-(p d)^{\alpha}}{1-1/e}(1+\epsilon)$ with any $\theta$}

Then we analyze the ratio bound of BWR, with $x^*$ as the value of optimal results of WIM problem, $y^*$ as that of WR algorithm and $z^*$ as that of BWR algorithm, according to Theorem~\ref{the:MC} and Lemma~\ref{lem:v/v}, we have:
\begin{equation}
\label{equ:x_z}
 \dfrac{x^*}{z^*} = \dfrac{x^*}{y^*}\cdot\dfrac{y^*}{z^*} = \frac{1}{(1-\dfrac{1}{e})} \cdot \dfrac{1-(pd)^{\alpha}}{1-(pd)^{\alpha^{'}}}
\end{equation}

On one hand, when $p\cdot d \leq 1$, we put Equation~\ref{equ:pd<1} into Equation~\ref{equ:x_z}:
%Putting Equation~\ref{equ:y_z} from Lemma~\ref{lem:v/v} into Equation~\ref{equ:x_z}, we get:
\begin{equation}
\label{equ:x/z1}
 \dfrac{x^*}{z^*} \leq \frac{1}{(1-\dfrac{1}{e})} \cdot \dfrac{1-(pd)^{\alpha}}{\frac{1-(p d)^{\alpha}}{(1-1/e)(1+\epsilon)}} = 1+\epsilon
 %\\ & = & \frac{A}{B}(1-(p_{a}d_{a})^{\log_{d}{p}})\leq \frac{A}{B}\cdot\frac{B}{A}(1+\epsilon)=1+\epsilon
\end{equation}

On the other hand, if $p\cdot d > 1$, we put Equation~\ref{equ:pd>1} into Equation~\ref{equ:x_z}, then we also get:
\begin{displaymath}
\dfrac{x^*}{z^*} \leq \frac{1}{(1-\dfrac{1}{e})} \cdot \dfrac{1-(pd)^{\alpha}}{\frac{1-(p d)^{\alpha}}{(1-1/e)(1+\epsilon)}} = 1+\epsilon
 %\\ & = & \frac{A}{B}(1-(p_{a}d_{a})^{\log_{d}{p}})\leq \frac{A}{B}\cdot\frac{B}{A}(1+\epsilon)=1+\epsilon
\end{displaymath}
Now, we complete the analysis of the approximation ratio.
%By plugging $A$ and $B$ into Equation~\ref{equ:C}, we get:
%\begin{displaymath}
%\frac{1}{(1-1/e)\beta} = 1+\epsilon
%\end{displaymath}
%$(1-1/e)\beta$ is the best solution of BWR algorithm according to Equation~\ref{equ:equ1}. We have:
%\begin{displaymath}
%x^{*}/z^{*} = 1+\epsilon
%\end{displaymath}

In WR algorithm, the time complexity is $O(n_{in}\cdot n + k(n + 2n_{in}+n^{2}_{in}))$. In BWR algorithm, the size of BIVT and BWDT is limited by $\theta$. We can estimate $n_{in}^{'}$ as $O(d^{\alpha^{'}})$. Thus, the time complexity is $O(d^{\alpha^{'}}\cdot n + k(n + 2d^{\alpha^{'}}+ d^{2\alpha^{'}}))$. Combining Equation~\ref{equ:alpha} and Equation~\ref{equ:theta}, we get:
\begin{displaymath}
\alpha^{'} = (1+\frac{1}{\log_{d}{p}})\cdot \log_{p}{(1-\frac{1-(pd)^{\alpha}}{1+\epsilon})}
\end{displaymath}
Now, we obtain the expression of $d^{\alpha{'}}$ and $\epsilon$. To simplify this expression, we denote $A = 1-\dfrac{1}{e}, B = 1-(pd)^{\alpha}, C = \frac{1}{1+\frac{1}{\log_{d}{p}}}$. %It is obviously that A, B and C are constants which are only relevant to input $G(V,E)$.
Thus, the expression of $d^{\alpha{'}}$ is:
\begin{equation}\begin{split}
d^{\alpha^{'}} & = d^{(1+\frac{1}{\log_{d}{p}})\cdot \log_{p}{(1-\frac{1-(pd)^{\alpha}}{1+\epsilon})}} =
(d^{\log_{p}{1-\frac{B}{A(1+\epsilon)}}})^{1+\frac{1}{\log_{d}{p}}} \\
 & = (d^{\dfrac{\log_{d}{1-\frac{B}{A(1+\epsilon)}}}{\log_{d}^{p}}})^{1+\frac{1}{\log_{d}{p}}} \\
 & = (1-\frac{B}{A(1+\epsilon)})^{\frac{1}{C\cdot \log_{d}{p}}}
\end{split}\end{equation}
This bound is polynomial in the input-which are the attributes of the network, which is in turn polynomial in $d$, $p$, $\alpha$ and in $1/\epsilon$. Since the running time of BWR is polynomial in $n$, $k$ and $d^{\alpha^{'}}$, BWR is a fully polynomial-time approximation scheme.
\end{proof}

Then, we estimate the space complexity of BWR.
Since $d^{\alpha^{'}}$ is much smaller than $O(|V|+|E|)$,
the space complexity is decreased as well.
Considering that many pairs of nodes with low reachability probability are ignored, we can only keep the reachable pair which $p_{r}(u,v)>\theta$.
Then the space complexity is $O(n + n\cdot d^{\alpha^{'}})$.

The cost and accuracy of BWR can be various because $\theta$ can be changed personally. Thus, we provide a free space for users to make a trade-off between cost and accuracy by changing $\theta$.

\section{EXPERIMENTS}
\label{sec:4}
To test the efficiency and effectiveness of the proposed algorithms, we conduce extensive experiments. %We adopt the CV algorithm as well as a number of other algorithms on several real-world graphs.
%The intension of our experiments is to illustrate following aspects:
%\begin{enumerate}
%\item the determination of its control threshold $theta$.
%\item it is good at solving influence maximization problem in both general IC and WIC model.
%\item its scalability compared with other related algorithms.
%\end{enumerate}
\subsection{Experimental Settings}

The experiments are performed on a PC with an Intel Core i5-3470 CPU and 8GB memory, running 64bit Ubuntu 12.04. We compare our algorithm with several algorithms. The TIM+ algorithm are implemented in C++ while others are implemented in JAVA 8.

\noindent \textbf{Datasets.} We use three real-world networks published by Jure Leskovec\cite{graph}in our experiments. The basic information about these networks are shown in Table~\ref{tab:dataset}. The Gnutella is a sequence of snapshots of the Gnutella peer-to-peer file sharing network from August 2002. Nodes in this graph represent hosts in the Gnutella network topology and edges represent connections between the Gnutella hosts. The second is the Amazon product co-purchasing network which is the same as used in \cite{chen-kdd10}. The last graph is a road network of California. These three graphs from various areas have different characteristics shown as~\ref{tab:dataset}. Thus, by evaluating the experiment results on them, we attempt to show that our WIC model and BWR algorithm can be adopted in different graphs to solve the influence maximization problem.

\begin{table}
\scriptsize
\centering
\caption{Dataset characteristics.}
\begin{tabular}{|l|c|c|c|} \hline
Dataset&Gnutella&Amazon&RoadNet-CA\\ \hline
\#Nodes & 6K& 262K&2.0M \\ \hline
\#Edges & 21K& 1.2M&2.8M \\ \hline
Average degree & 6.6& 9.4&2.8 \\ \hline
Largest Component size & 6299& 262K&2.0M\\ \hline
Diameter & 9& 32&849\\ \hline
\end{tabular}
\label{tab:dataset}
\end{table}

\noindent \textbf{Propagation Models.} We test the algorithms on two influence propagation models, the general IC model and our WIC model (see Section 2). Since our algorithm can handle the nonuniform propagation probabilities, we use \textit{TRIVALENCY} model in \cite{chen-kdd10} to generate these nonuniform probabilities in both IC model and our WIC model. That is, on each edge $e_(u,v)$, we select a probability from the set $\{ 0.001,0.01,0.1 \}$ randomly, which represents weak, medium and strong connection in a social network.

Specifically, \textit{WIC} model has a weight $\sideset{}{_{v}}{\mathop{\mathrm{w}}}$ of each node $v$. In view of this difference, we generate a random weight from positive integers less than or equal to 10 for each node. However, in the general IC model, we will set all the weights as 1 since the general IC model is a special case of WIC model.(see Section 2)

\noindent \textbf{Algorithms.} We compare our BWR algorithm with basic greedy algorithm presented in Section~\ref{sec:greedy} and other related algorithms. For these comparisons, we have two main targets. One is to compare the applicability of IC and WIC model in terms of WIM problem with same algorithms. The other is to test the performance of BWR in WIC model. Thus, we develop various algorithms on IC and WIC. The setup and implementation details of these algorithms are as follows.

\begin{itemize}
\item \textbf{BWR($\theta$):} We implement BWR algorithm presented in Section~\ref{sec:bound} on both IC and WIC model.

\item \textbf{TIM+:} It is a near-optimal time complexity algorithm~\cite{tim} which returns a $(1 - 1/e - \epsilon)$-approximation within no more than $(1-n^{-l})$ probability. This algorithm can achieve almost best performance in general IC model. Thus, we use its result to illustrate the applicability of previous solution in WIC model.
TIM is almost the best solution in influence maximization problem in general \textit{IC} model.\cite{tim}
Since the memory of our computer is not enough to support small $\epsilon$ which can generate much RR sets, we set $\epsilon = 0.1$ in our first graph whereas we set $\epsilon$ as small as possible to ensure its accuracy in larger graphs.

\item \textbf{Greedy Algorithm for IC model:} The original greedy algorithm on the IC model is presented in \cite{kempe}. For each round in node selection, the influence simulations are repeated $20,000$ times to obtain an optimal seed set $S$. Moreover, we also simulate the influence process $20,000$ times to estimate $\sigma_{I}(S)$ accurately.

\item \textbf{Greedy Algorithm for WIC model:} Similar with Greedy algorithm in IC model, the simulation times $R$ is $20,000$. The only difference is that we consider the weights of each node and choose the node which can generate the most valuable influence.(see Algorithm~\ref{alg:alg1})

\item \textbf{PageRank for IC model:} A respective algorithm presented in \cite{pagerank} is used to rank web pages.
We add this algorithm here because it performs well in IM problem.
Moreover, PageRank can be a criterion to evaluate the performances of other algorithms since many previous researches\cite{goyal-icdm11}\cite{goyal-vldb11}\cite{chen-kdd10} implement PageRank as a baseline.
Due to the additional attribute of the propagation probability $p(u,v)$, we change the general PageRank.
That is, along edge $e_(u,v)$, the transition probability is $p(u,v)/\sum^{O_{u}}_{i=1}p(u,v_{i})$.
Consider that the transition probability of $(u,v)$ indicates the vote from $u$ to $v$.
If $p(u,v)$ is high, it means $u$ is influential to $v$.
Hence $u$ should vote $v$ more. The damping factor $d$ is $0.85$ in our implementation of PageRank. By power iteration
\footnote{In mathematics, the power iteration is an eigenvalue algorithm: given a matrix $A$, the algorithm will produce a number $\lambda$(the eigenvalue) and a nonzero vector $v$(the eigenvector), such that $Av = \lambda v$. This algorithm is also known as the Von Mises iteration.\cite{power-i}}, we can compute the PagerRank value easily. The criterion of stopping this algorithm is that the iterations are more than $10,000$ times and the difference of two iterations is less than $0.001$.
\item \textbf{PageRank for WIC model:} The only difference with PageRank for IC model is that each node $u$ has
$w_{u}$ votes at the beginning. A high weight means high value if this node is activated. Thus, nodes with high weights own more votes to represent its importance in the graph. Then the node which can activate the most valuable nodes will obtain the most votes.
\item \textbf{Random:} As a baseline, we select $k$ random nodes in both IC and WIC models.
\end{itemize}

\begin{table}
\scriptsize
\centering
\caption{Influence spread of two models.}
\begin{tabular}{|c|c|c|c|c|c|c|}
\hline
 \multicolumn{2}{|c|}{Graph $\&$ Model} & Greedy & BWR & TIM+ & PageRank & Random \\
\hline
\multirow{2}{*}{Gnutella} & IC & 420.11 & 508.47 & 502.55  & 371.73 & 291.84 \\
 & WIC & 797.67 & 629.49 & - & 375.08 & -\\ \hline
 \multirow{2}{*}{Amazon} & IC & N/A & 400.95 & 402.06 & 272.15 & 338.61 \\
 & WIC & N/A & 616.45 & - & 282.40 & -\\ \hline
 \multirow{2}{*}{RoadNet} & IC & N/A & 369.88 & 348.90 & 310.45 & 289.60 \\
 & WIC & N/A & 594.31 & - & 315.50 & -\\ \hline
\end{tabular}\label{tab:exp1}\end{table}
%4.2
\subsection{Comparison between IC and WIC}
We conduct extensive experiments on both IC and WIC model to compare the applicability of IC and WIC model in WIM problem. We compare the expectation of influence spread of same algorithms on both models. The seed set size $k$ is $50$. We only implement Greedy in first graph since other graphs are too large to run Greedy algorithm in available time.

The experimental results are shown in Table~\ref{tab:exp1}. Since TIM+ and Random algorithm have same results in both IC and WIC model, their influence spreads in both models are same. In Table~\ref{tab:exp1}, we observe that all algorithms can produce better solutions except that TIM+ and Random perform same in both models. More precisely, Greedy in WIC model, which can achieves the most accurate approximation solution, performs $89.87\%$ better than in IC model. BWR in WIC model also performs $23.8\%$, $53.75\%$ and $60.68\%$ better than IC model. Since the basic idea of PageRank is based on the structure of network, PageRank is not applicable in WIC model and generates similar results in both model. Thus, PageRank hardly changes in both models.

Considering the significant difference of the results in both models, we can conclude that WIC model is necessary since it can be used to solve WIM problem much better than IC model.

\subsection{Comparison of Algorithms}
To compare the effectiveness and efficiency of our algorithm and other algorithms, we run these algorithms in both IC and WIC model with three real datasets. We set $\theta = 1/10^{4}$. % and the reason why we choose this value will be explained in the \textit{Impact of parameters.}
The range of seed set size $k$ are ${1,2,5,10,20,30,40,50}$. %In order to facilitate observation of data, we provide a general legend for all algorithms. %The symbols of each algorithm are same in Figure~\ref{fig:wresult} and Figure~\ref{fig:result}

Figure~\ref{fig:wresult} shows the experimental results for three graphs on WIC model while Figure~\ref{fig:result} shows the results on IC model.
In Figure~\ref{fig:result}, for BWR, we add the number of influenced nodes activated by BWR in WIC model.
In this way, we show that BWR focus on the value of influence spread rather than the number of activated nodes. Figure~\ref{fig:times} shows the running time of three graphs for $k=50$ on two models.
%From the results in Figure~\ref{fig:wresult}, BWR on WIC model does not focus on the number of nodes but the values of nodes activated in final outcome.
\begin{figure*}[t]
\centering
\subfigure[Gnutella]{
\label{fig:w6301} %% label for first subfigure
\includegraphics[width = 2.2in]{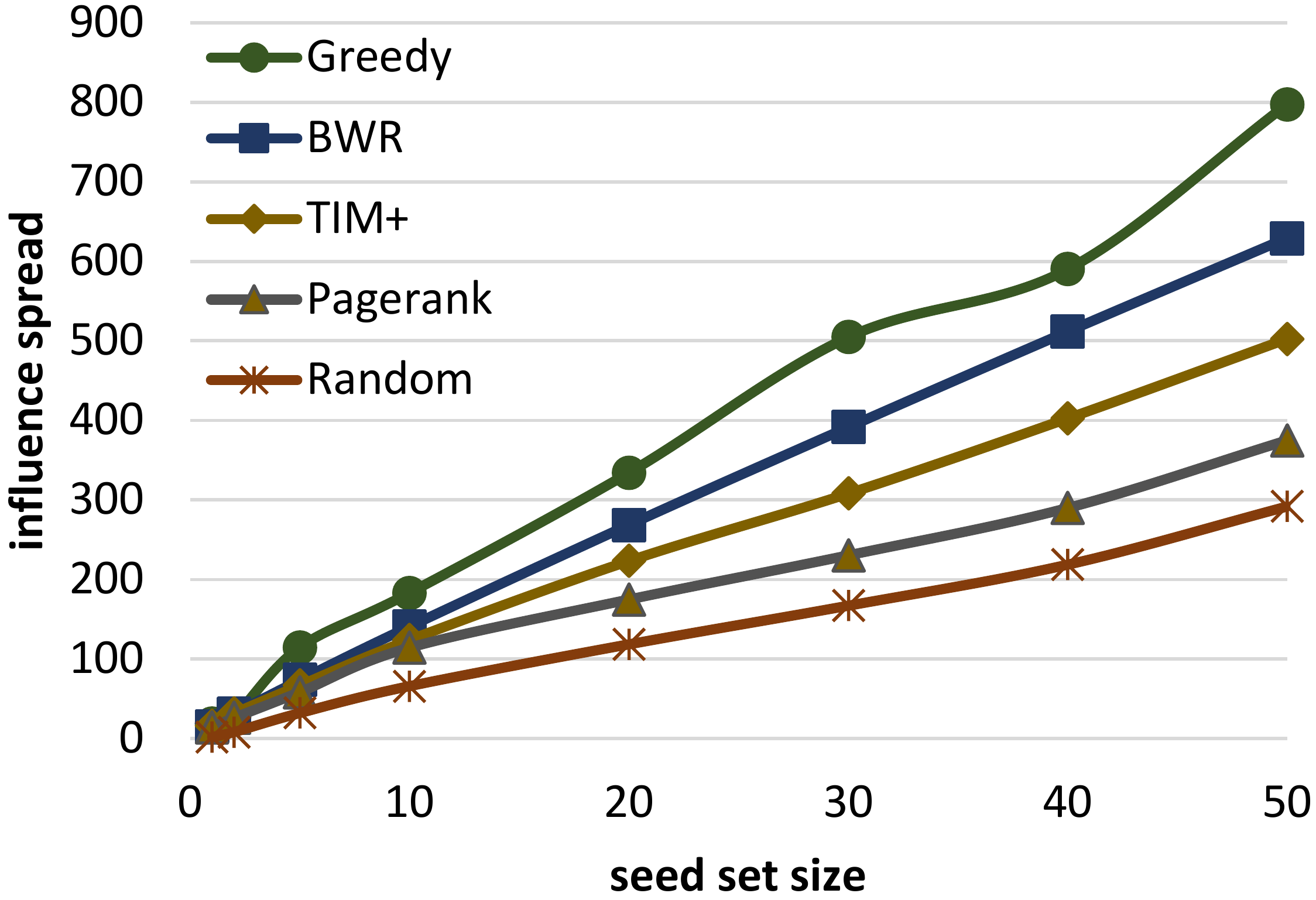}
}
\subfigure[Amazon]{
\label{fig:w262} %% label for second subfigure6301w.pdf
\includegraphics[width = 2.2in]{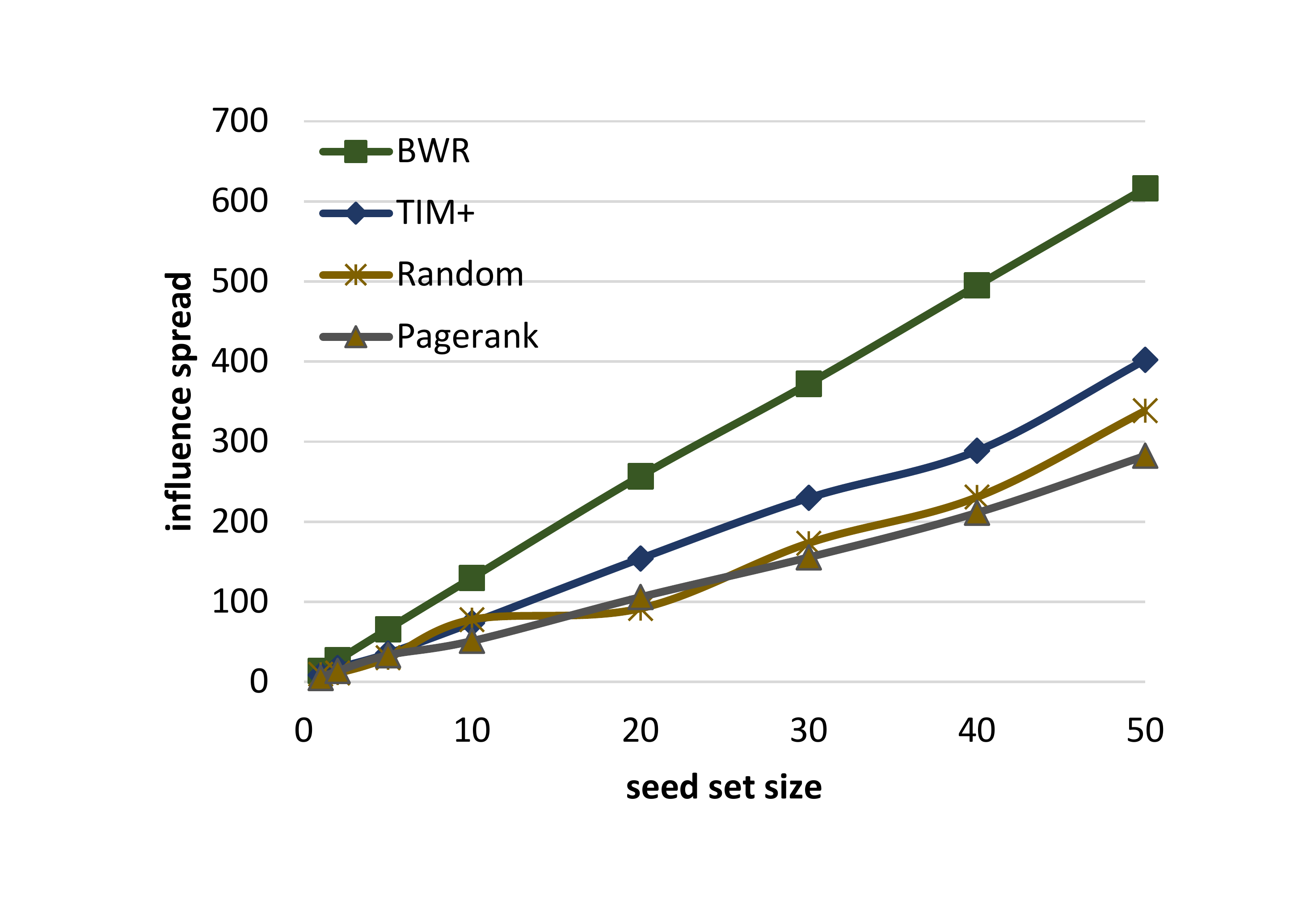}
}
\subfigure[RoadNet-CA]{
\label{fig:w1971} %% label for second subfigure
\includegraphics[width = 2.2in]{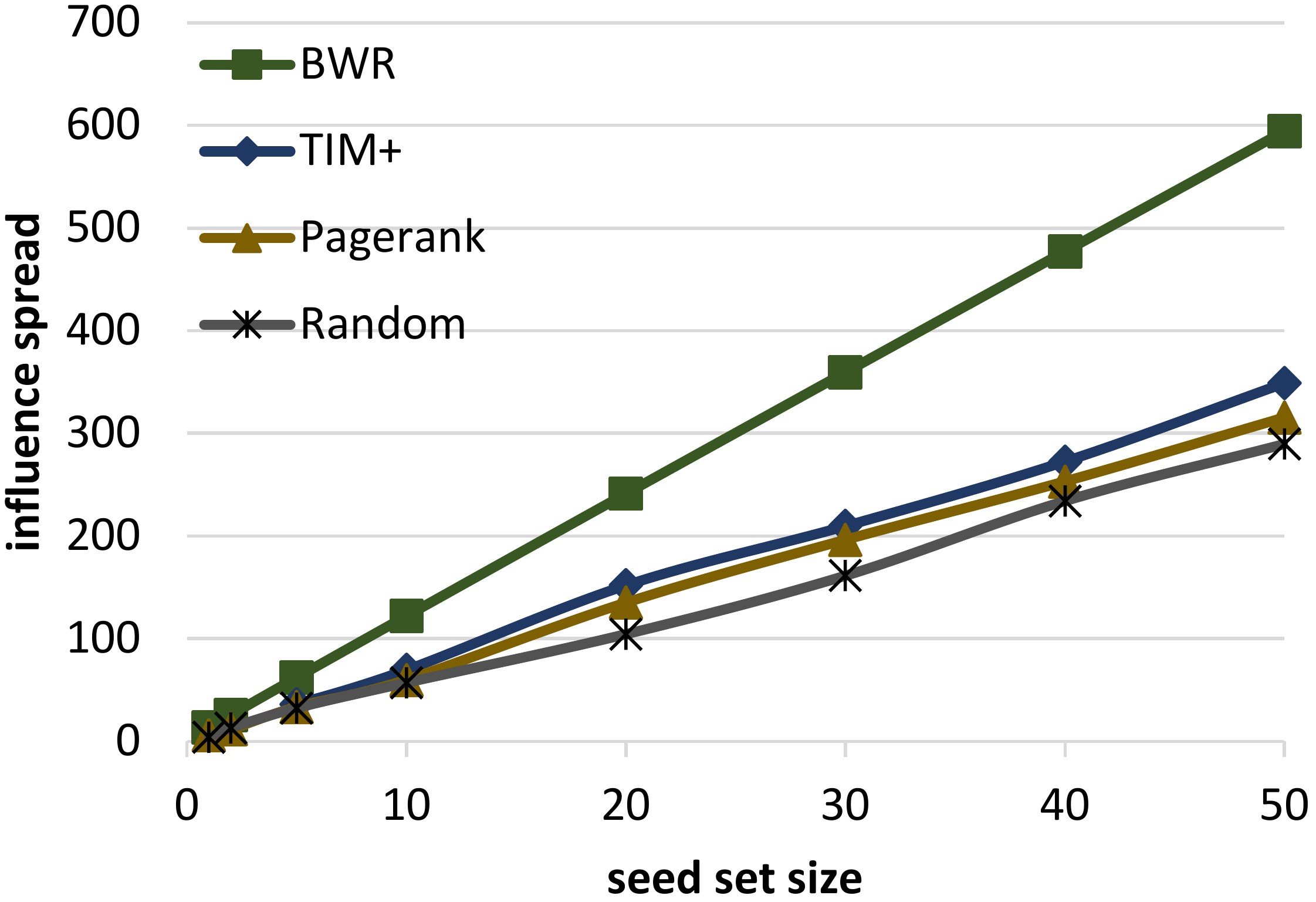}
}
\caption{Influence spread results on WIC model}
\label{fig:wresult} %% label for entire figure\includegraphics[%
\end{figure*}

\begin{figure*}[t]
\centering
\subfigure[Gnutella]{
\label{fig:6301} %% label for first subfigure
\includegraphics[width = 2.2in]{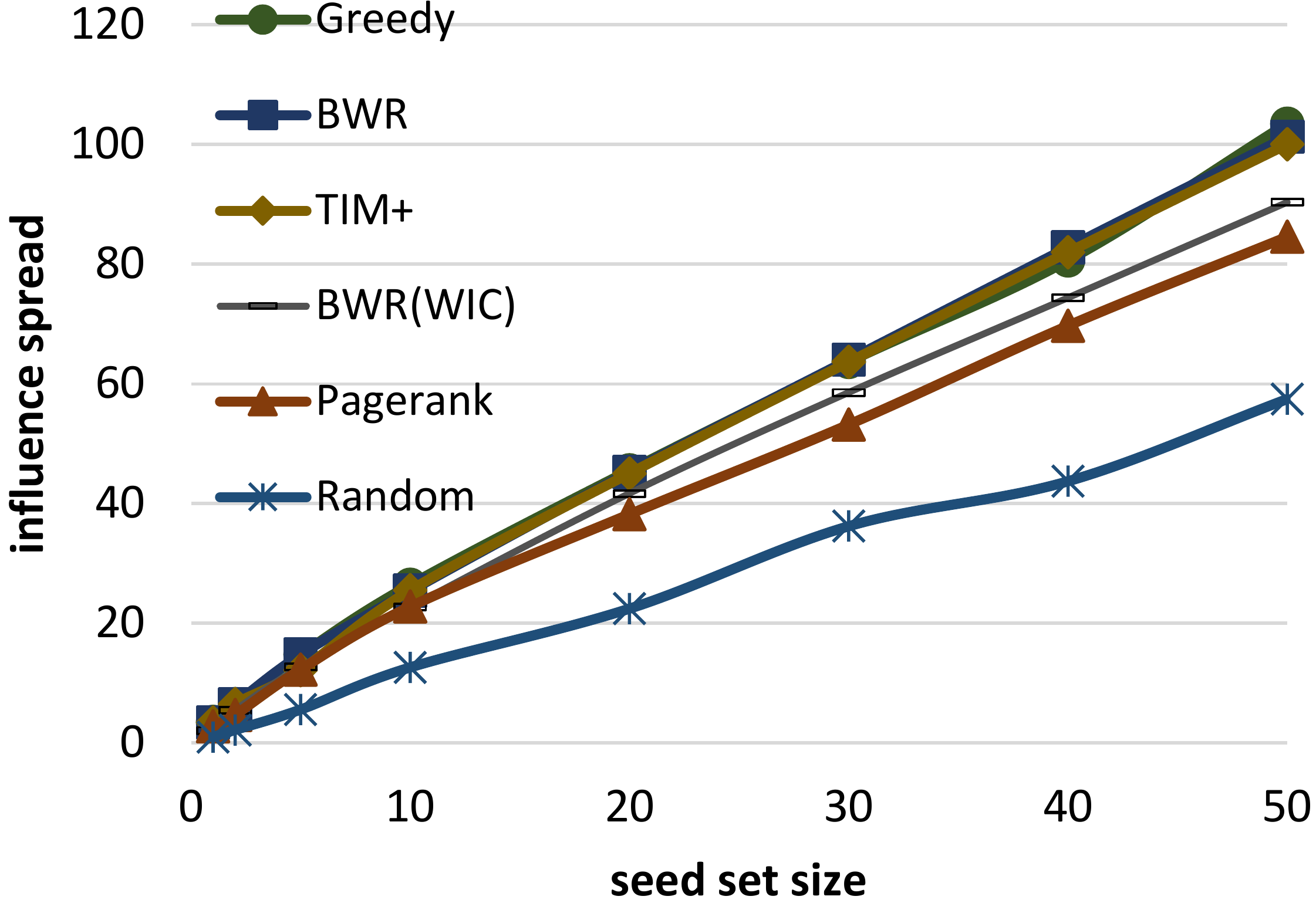}
}
\subfigure[Amazon]{
\label{fig:262} %% label for second subfigure6301w.pdf
\includegraphics[width = 2.2in]{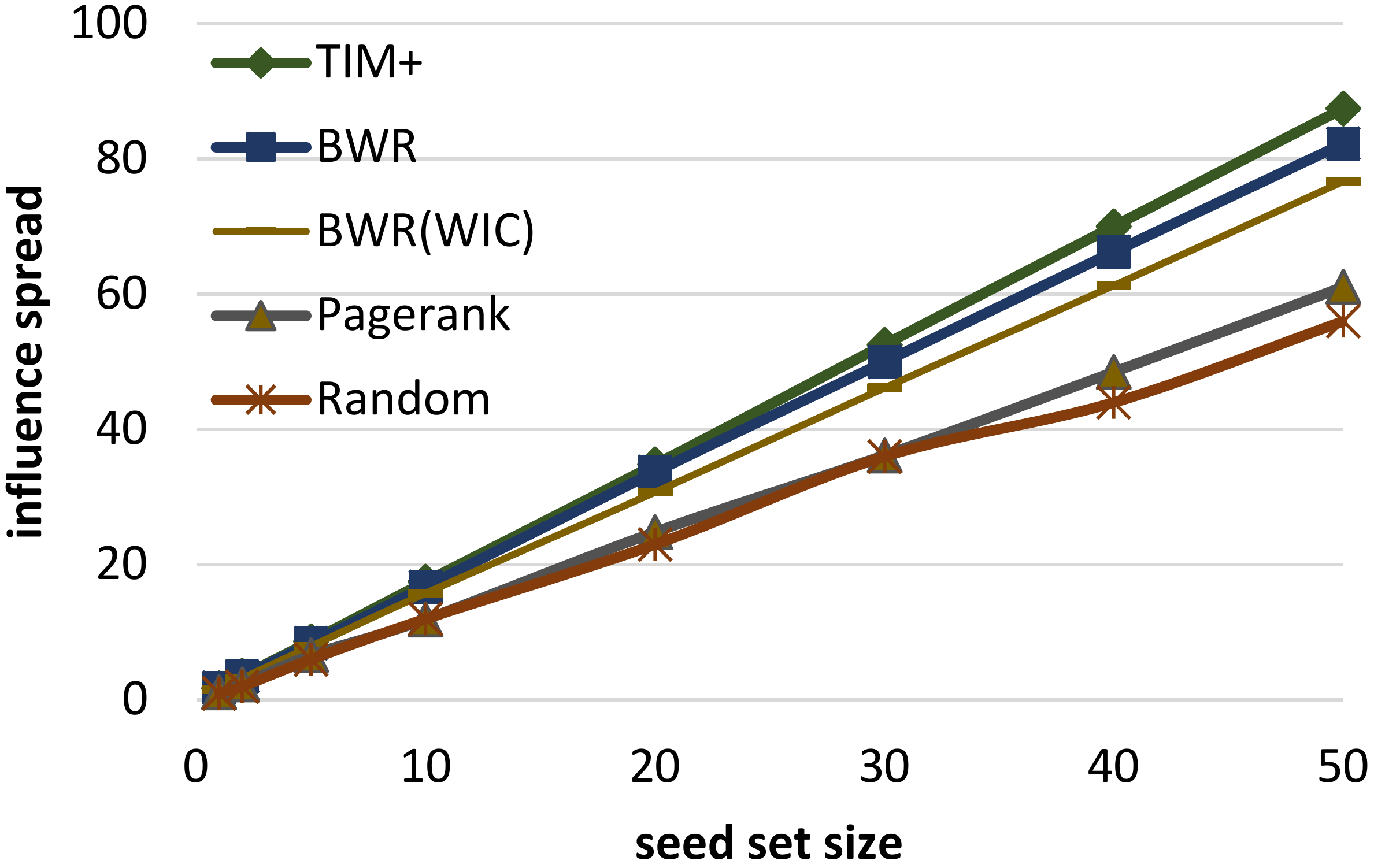}
}
\subfigure[RoadNet-CA]{
\label{fig:1971} %% label for second subfigure
\includegraphics[width = 2.2in]{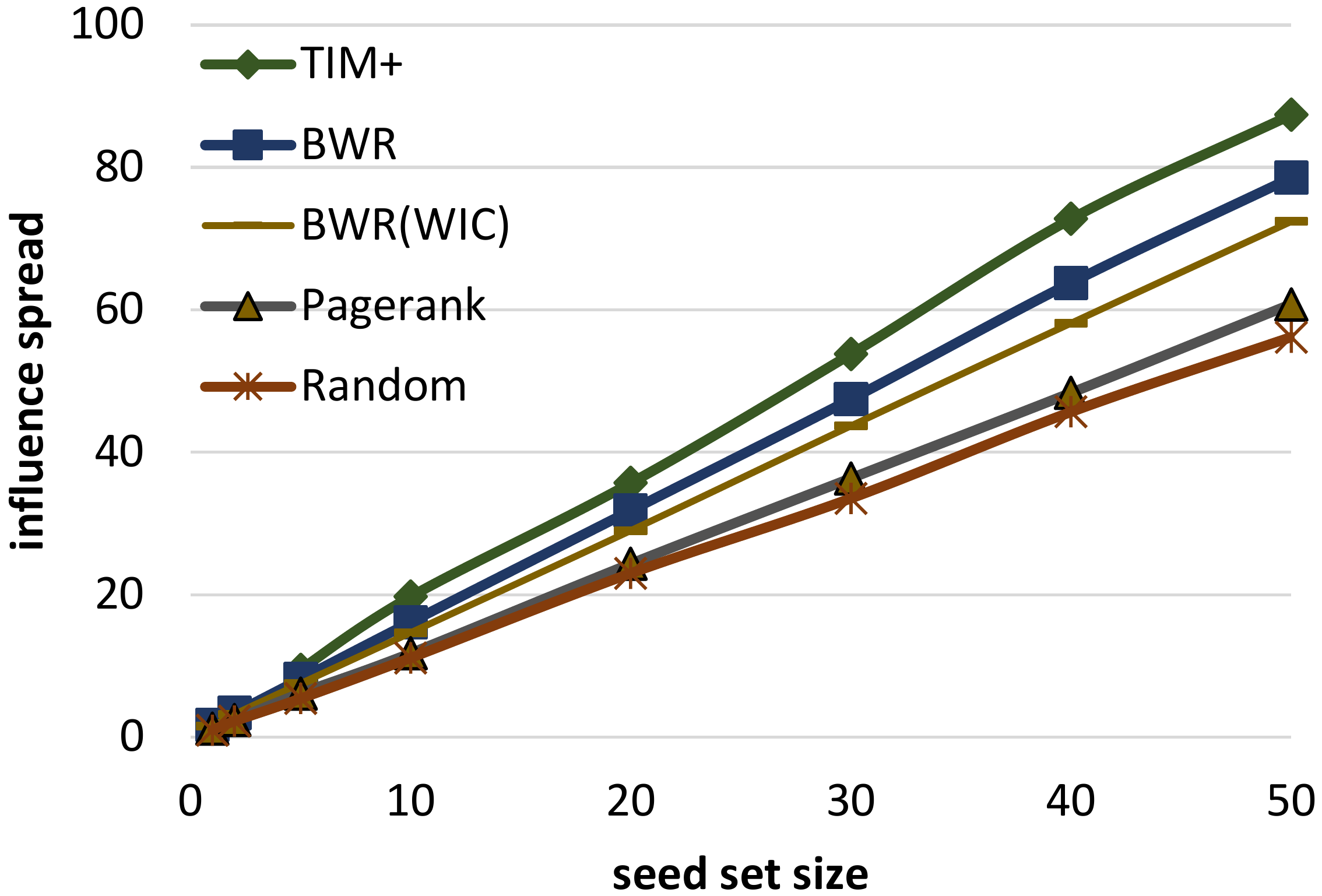}
}
\caption{Influence spread results on general IC model}
\label{fig:result} %% label for entire figure\includegraphics[%
\end{figure*}

%\noindent \textbf{Outcomes and running time for real world datasets.}

\textbf{\textit{Gnutella.}} The first graph Gnutella has moderate size, which means that we can run the costly Greedy algorithm to produce the best influence spread.
In Figure~\ref{fig:w6301}, it shows that Greedy achieves the best result. BWR outperforms other algorithms except Greedy. BWR is $37.8\%$ and $40.4\%$ better than TIM+ and PageRank although it is $21.0\%$ less than Greedy. In IC model, the result is shown as Figure~\ref{fig:6301} where Greedy, BWR and TIM+ can produce same results.
Compared with other algorithms, BWR performs very well on IC model even though it does not aim at IC model.
Although the number of nodes activated on WIC model by BWR are less than TIM+, BWR can produce more influence spread value in WIC model, which achieves our goal of effective viral marketing. This result indicates that previous algorithms applied on IC model are not suitable for WIC model whereas BWR produces effective influence spread results in both models.

The efficiency results are similar on both IC and WIC model. Greedy is the slowest, taking more than $8,000$ seconds while BWR only costs less than $1$ second, which faster than Greedy three orders of magnitude. BWR is also faster than TIM+, which can gain same accuracy with BWR on IC model, with nearly one order of magnitude. PageRank and Random are faster than BWR whereas their accuracy are not comparable with other three algorithms.

\textbf{Amazon.}Figure~\ref{fig:w262} and Figure~\ref{fig:262} show the results of Amazon product co-purchasing network, which is too large for Greedy since the network contains more than a million edges. In this dataset, BWR performs the best.
In WIC model, BWR has a great winning margin over other algorithms: it outperforms TIM+ $53.2\%$ and any other algorithms ($118.4\%$ and $82.2\%$ more accurate than PageRank and Random).
In IC model, BWR also performs well while BWR is only $5.7\%$ less than TIM+. Yet BWR is $34.4\%$ and $46.4\%$ better than PageRank and Random.

It is interesting that the performances of PageRank and Random are similar. This is because there is a giant component\footnote{In network theory, a giant component is a connected component, of a large scale connected graph, which contains most of the nodes in the network (such as over 80\% nodes).} in Amazon dataset. According to Table~\ref{tab:dataset}, the largest component size equals the whole nodes which means almost all nodes are in the giant component. Thus, the probability of selecting or activating them is large since influential nodes gather together.~\cite{giant_c}

Again, the fact that the numbers of activated nodes by BWR in WIC model is less than TIM+ shows the effectiveness of BWR algorithm, which aims at maximizing the influence value rather than only number nodes. For running time, we can see that even though BWR loses $5.7\%$ accuracy, it is $6$ times faster than TIM+ .

\textbf{Road-CA.}Finally, in the two million nodes dataset Road-CA, Figure~\ref{fig:w1971} and Figure~\ref{fig:1971} shows the results, which is similar to the results on Amazon dataset. In WIC model, BWR produces undoubtedly the best results (at least $70.7\%$ stronger than others) contrasts with any other algorithms. In IC model, TIM+ is slightly better than BWR whereas its running time is $30$ times more than BWR. BWR is much better than PageRank and Random in terms of the accuracy.
\subsection{The Impact of \textbf{$\theta$}}
\begin{figure*}
\centering
\subfigure[IC model]{
\label{fig:t1} %% label for first subfigure
\includegraphics[width = 0.4\linewidth]{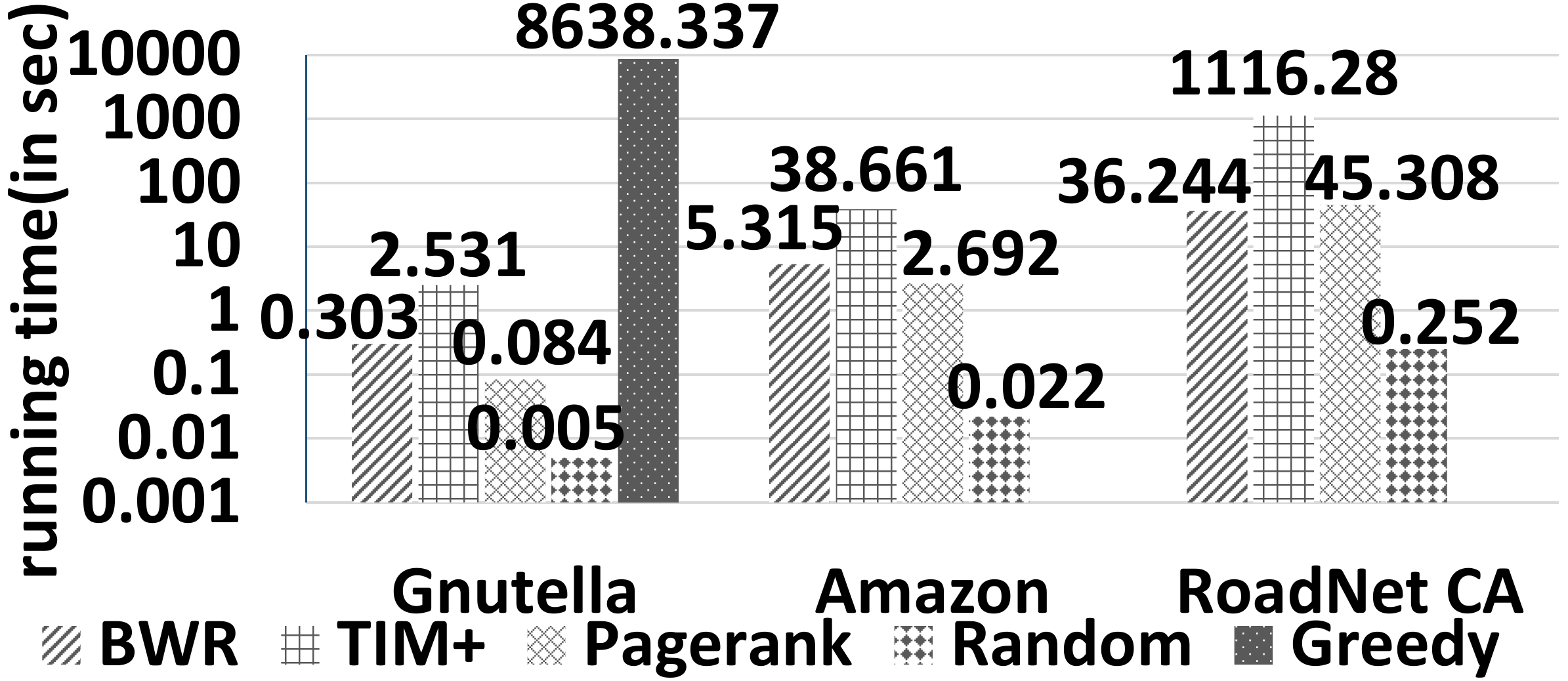}
}
\subfigure[WIC model]{
\label{fig:t2} %% label for second subfigure6301w.pdf
\includegraphics[width = 0.4\linewidth]{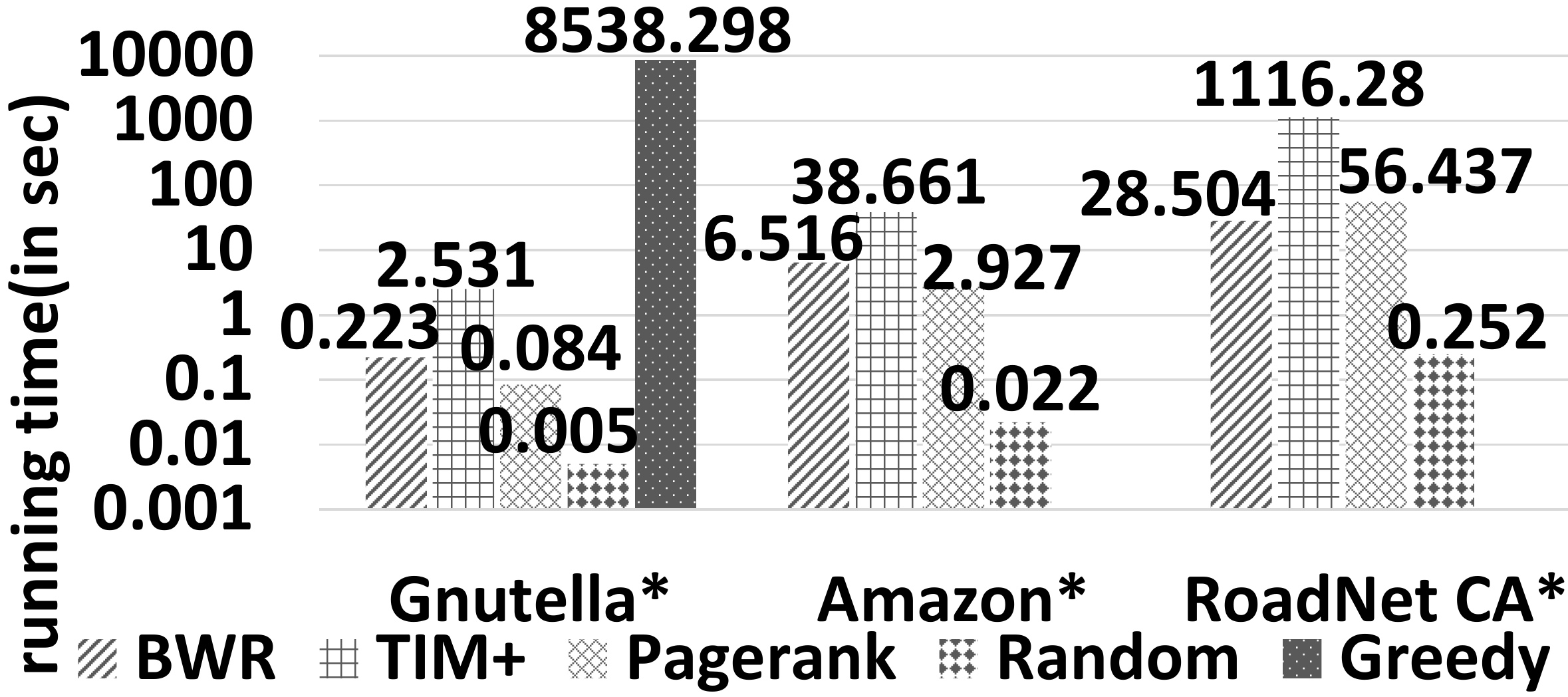}
}
\caption{Running time of different algorithms}
\label{fig:times} %% label for entire figure\includegraphics[%
\end{figure*}

\begin{figure}
\label{sec:theta}
\centering
\subfigure[Influence spread and running time $\&$ the threshold $1/ \theta$ in IC model for Gnutella dataset]{
\label{fig:theta_} %% label for first subfigure
\includegraphics[width = 0.45\linewidth]{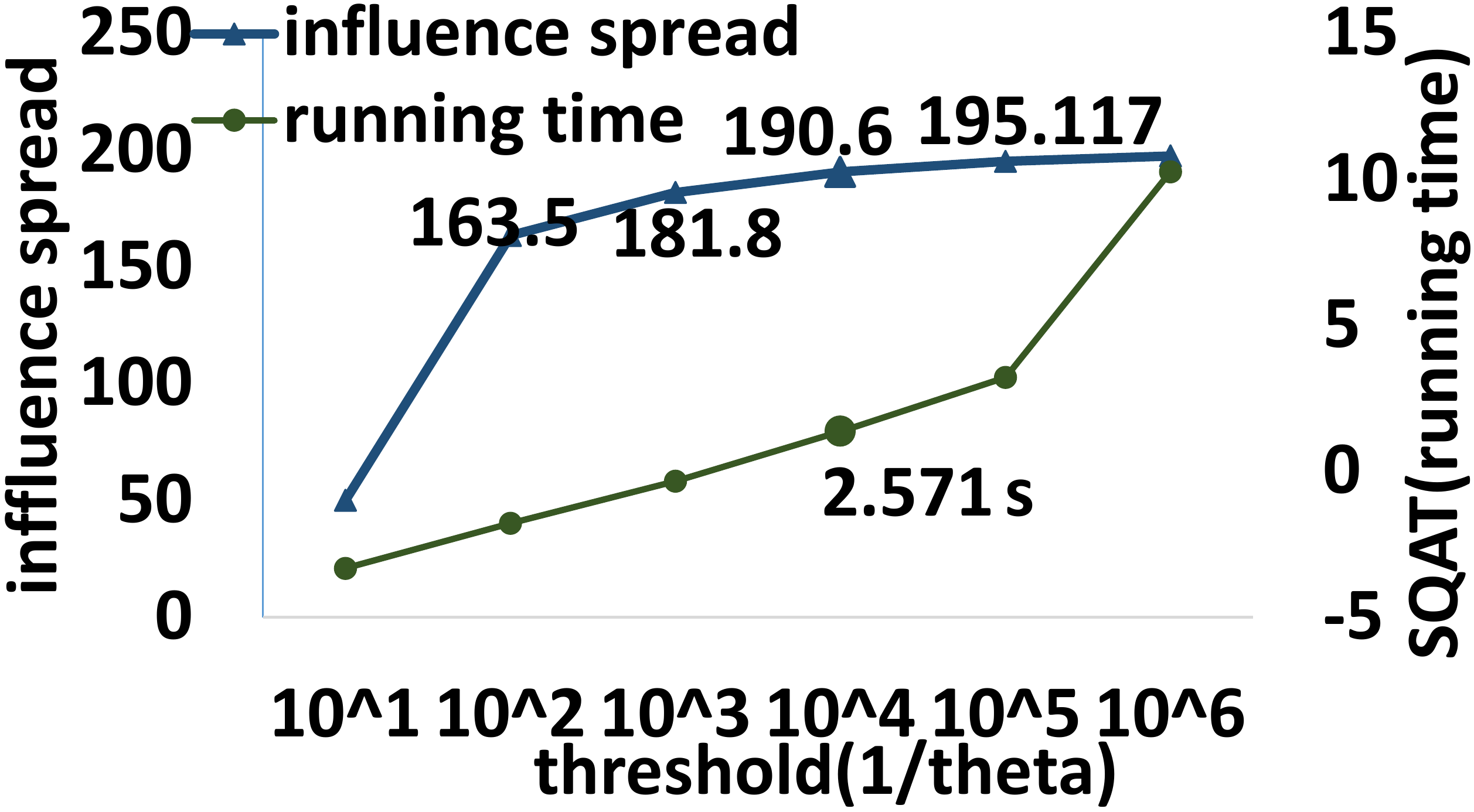}
}
\subfigure[Influence spread of BWR and Greedy with $k = 50$ in IC model for Gnutella dataset]{
\label{fig:theta_g} %% label for second subfigure6301w.pdf
\includegraphics[width = 0.45\linewidth]{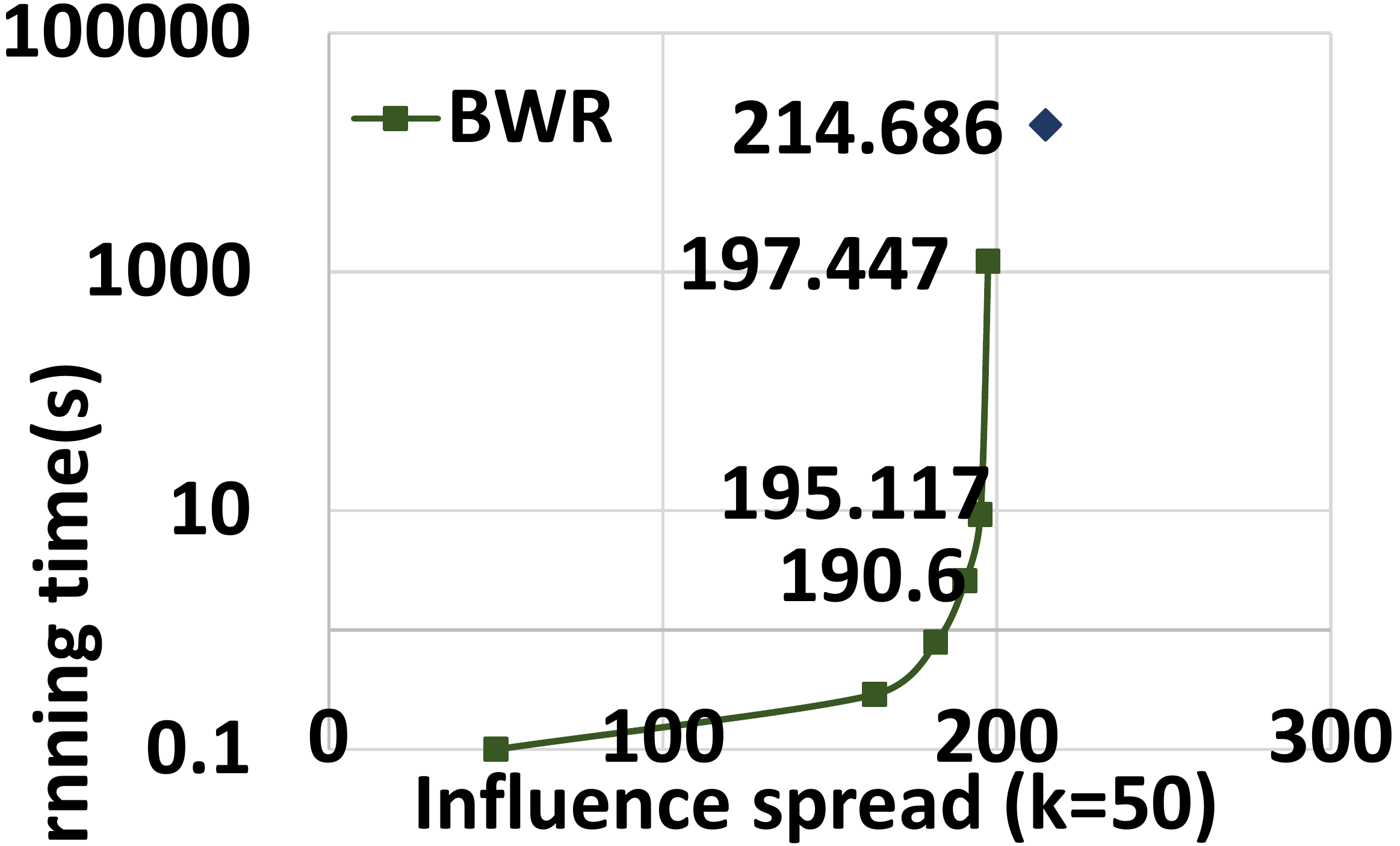}
}
\caption{Relation between $\theta$ and maximal influence and running time}
\label{fig:theta} %% label for entire figure\includegraphics[%
\end{figure}

%\noindent \textbf{The Impact of $\theta$}
We investigate the impact of threshold $\theta$ by comparing the difference of running time and the influence spread results along with the decrease of $\theta$. We use Gnutella peer-to-peer network with all probability $p_{uv}$ being $0.1$ and all $V_u= 1$. Figure~\ref{fig:theta_} shows the tendency of influence spread and running time with the decrease of $\theta$ while Figure~\ref{fig:theta_g}
goes future to show the efficiency of $\theta$ by compared with Greedy algorithm which produces the best result.

In Figure~\ref{fig:theta_}, influence spread increases fast when $\theta$ is large enough. However, when $\theta$ gets less than $1/10^{4}$, the influence spread grows slowly. This tendency of accuracy coincides with Lemma~\ref{lem:v/v}. According to Equation~\ref{equ:v/v}, when $p\cdot d < 1$, the denominator of Equation~\ref{equ:v/v} grows slower and slower since $p\cdot d$ grows exponentially. Thus, the additional performance bound decreases tardily with the growth of $\alpha^{'}$, which is linear with $(\log_{p}{1/\theta})$ on this dataset. Moreover, with the decrease of $\theta$, the running time increases almost linearly with $\lg{1/\theta}$. We can roughly explain it as follows. According to Section 3.3, the time complexity of BWR is $O(n_{in}\cdot n + k(n + 2n_{in}))$. When $n$ and $k$ are fixed, the variable is $n_{in}$. With the decrease of $\theta$, the sizes of BIVT and BWDT grow quadratically. However, since different graphs have different densities, we cannot ensure the running time is always linear with $(\lg{1/\theta})$.

To make a proper trade-off between accuracy and running time, we analyze the relationship between influence spread and running time of BWR. Figure~\ref{fig:theta_g} shows the result. Interestingly, there is an inflection on the result curve of BWR. As the influence spread increases, the running time grows immediately when the influence spread of BWR is larger than $195.117$ where $\theta\ =\ 1/10^{5}$. On the inflection, we can obtain a proper trade-off of accuracy and efficiency. Compared with Greedy algorithm, we observe that when $\theta$ = $1/10^{4}$ or $1/10^{5}$, the performance is very well whereas the running time is extremely smaller than Greedy algorithm, with nearly four orders of magnitude. With the increase of $\theta$, the influence spread hardly changes. However, since the running time grows quadratically, it is not cost-effective to continue to increase $1/\theta$. Hence for this graph, $\theta = 1/10^{4}$ is an ideal value. We observe similar situations in other datasets. Therefore, in our experiments, we set $\theta = 1/10^{4}$.

Furthermore, we can give a bottom line of the accuracy with particular $\theta$. According to Equation~\ref{equ:v/v} in Section 3.3, we can estimate $y^{*}/x^{*} \geq 0.7125$ if we assume the numerator equals $1$. According to experiment settings, $\theta = 1/10^{4}$ and all $p_{r}(u,v) = 0.1$ which means the spread process is no more than $3$ steps. Hence $\alpha^{'} = 3$. Then we evaluate the BWR's approximation bound by our experiment results. The approximation ratio of BWR is $ 87.88\% \cdot (1-1/e)$. It is better than the ratio bound calculated by Equation~\ref{equ:v/v} significantly, since we assume the numerator of Equation~\ref{equ:v/v} equals $1$ which means all nodes are reachable. However, it is impossible that in a graph each node can arrive all other nodes.

%\textcolor{red}{Overall, we see that BWR can not only produce accurate solution of influence maximization problem in IC model, more importantly, it can also constantly achieve most effective and efficient viral marketing results in WIC model in all size ranges. Moreover, brilliant scalability of BWR helps BWR to achieve high speed, which can scale beyond millions of nodes in tens of seconds.}\footnote{what is goal of this paragraph}

\section{CONCLUSION}
\label{sec:5}
In this paper, we present WIC model, a more general model of social network. This novel model aims at making influence maximization problem more practical by considering the difference of node weights. To tract influence maximization problem on WIC model, we present a greedy algorithm and BWR algorithm. The greedy algorithm can achieve $(1-1/e)$ approximation solution which is the most accurate performance in polynomial time. Moreover, it is $89.87\%$ better than previous greedy algorithm in IC model. To improve the efficiency of the solution, we present a novel algorithm BWR which makes an excellent trade-off between accuracy and efficiency. The time complexity of BWR algorithm is $O(d^{\alpha^{'}}\cdot n + k(n + 2d^{\alpha^{'}} + d^{2\alpha^{'}}))$ and the approximation ratio is $(1-1/e)\cdot\beta$. Extensive experimental results show that, when $\theta = 1/10^{4}$, BWR can handle a million-node graph on a usual personal computer within tens of seconds. Such brilliant efficiency and effectiveness performance make BWR to be an excellent solution for practical applications.
% end the environment with {table*}, NOTE not {table}!

~\\

\noindent \textbf{Future Work.} One possible future research direction is exploring the application of node weight on other models such as Linear Threshold model. That is, we can imitate the idea of WIC model and present Weighted Linear Threshold model. Because of the differences among models, it will take some efforts to extend other models. However, considering node weight in some practical application cannot be ignored, research about weighted model is necessary.

Another potential research direction is integrating influence maximization and social relationships. Data mining of social relations from real online social network is a valuable aspect. Combining influence maximization and social influence relationship together could achieve better prevalent viral marketing effectiveness. For instance, in a recommender system where users can be friends with each other, we cannot only obtain the accurate preference of users, but, more importantly, we can take advantage of the strong relationships among users to maximize profits. We can spend the least and achieve the best performance at the same time. In this way, what we push is not advertisement but accurate recommendation based on what people like by mining their preference.

\bibliographystyle{abbrv}
\bibliography{bwr}

\end{document}